\renewenvironment{framed}[1][\hsize]
  {\MakeFramed{\hsize#1\advance\hsize-\width \FrameRestore}}%
  {\endMakeFramed}
\pgfplotsset{compat=1.15}
\newcommand{\PP}{\text{I\kern-0.15em P}}
\renewcommand{\eqref}[1]{\hyperref[#1]{(\ref*{#1})}}
\author{Arghya Chakraborty \and Rahul Vaze}
\newtheorem{theorem}{Theorem}[section]
\newtheorem{lemma}[theorem]{Lemma}
\newtheorem{proposition}[theorem]{Proposition}
\newtheorem{claim}[theorem]{Claim}
\newtheorem{definition}[theorem]{Definition}
\newtheorem{remark}[theorem]{Remark}
\newtheorem{property}[theorem]{Property}
\definecolor{dtsfsf}{rgb}{0.8274509803921568,0.1843137254901961,0.1843137254901961}
\definecolor{qqqqff}{rgb}{0,0,1}
\definecolor{qqffqq}{rgb}{0,1,0}
\definecolor{ffqqqq}{rgb}{1,0,0}
\title{Online facility location with weights and congestion}
\begin{document}

\maketitle

\begin{abstract}


The classic online facility location problem deals with finding the optimal set of facilities in an online fashion when demand requests arrive one at a time and facilities need to be opened to service these requests. In this work, we study two variants of the online facility location problem; (1) weighted requests and (2) congestion. Both of these variants are motivated by their applications to real life scenarios and the previously known results on online facility location cannot be directly adapted to analyse them.

\begin{description}

\item[Weighted requests:] In this variant, each demand request is a pair $(x,w)$ where $x$ is the standard location of the demand while $w$ is the corresponding weight of the request. The cost of servicing request $(x,w)$ at facility $F$ is $w\cdot d(x,F)$. For this variant, given $n$ requests, we present an online algorithm attaining a competitive ratio of $\mathcal{O}(\log n)$ in the secretarial model for the weighted requests and show that it is optimal. 

\item[Congestion:] The congestion variant considers the case when there is an additional congestion cost that grows with the number of requests served by each facility. For this variant, when the congestion cost is a monomial, we show that there exists an algorithm attaining a constant competitive ratio. This constant is a function of the exponent of the monomial and the facility opening cost but independent of the number of requests.

\end{description}


\end{abstract}

\newpage

\section{Introduction}

The facility location problem is one of the most well-studied problems in the field of algorithms \cite{vaziranifl,arora,lin_vitter}. Meyerson \cite{959917} studied the online version of this problem where requests arrive in a sequence and on arrival, the request must be allocated to a facility and a cost is incurred depending on the distance of the request from the facility. New facilities can also be opened on the arrival of a new request so that the current request and future requests have a nearby facility to be allocated to. However opening facilities incur a cost too and the goal is to minimize the total cost incurred.

To motivate the variants of online facility location we study, let us consider a plausible real-life scenario where such an online facility location algorithm may be deployed. Suppose the government is trying to set up vaccination centres in order to distribute vaccines depending on incoming requests. While online facility location can model the problem, a few other issues arise.  For instance, one facility might not be able to satisfy a large number of requests and depending on the number of requests, there will be congestion. Also, there may be multiple vaccine doses resulting in weighted requests where each request needs to be satisfied a certain number of times (for instance, some people might require 2 doses while others require more\footnote{We assume that these multiple requests must be served by the same facility else these multiple requests may be treated as independent requests.}). There are numerous other real life applications of the facility location problem. These additional variants involving congestion and weighted-requests have their own costs which need to be taken into account when designing a solution and the classical facility location problem may not be directly applicable in these cases. In this work, we generalize the classical facility location problem to incorporate these extra variants and prove matching upper and lower bounds. 
We study two variants of the classical facility location, each incorporating one of these variants. We now define the classical online facility location problem and these two variants formally.

\subsection{Online Facility Location Problem}
The facility location problem comprises of a set of requests $\{x_1,x_2,...,x_n\}$ on a metric space $(\mathscr{M},d(\cdot\;,\;\cdot))$. The objective is to open a set of facilities $F$ on this metric space. There is a facility opening cost of $f$ incurred for opening each such facility and a distance cost for each request which is proportional to the distance of the request to its nearest facility. Naturally, the objective is to open facilities such that the total cost incurred is minimum, where the total cost incurred is given by
\[f\cdot|F|+\sum_{i=1}^nd(x_i,F)\]
where $d(x_i,F)$ denotes the distance of $x_i$ to its nearest facility in $F$. This problem is known to be NP-Hard \cite{FOWLER1981133}.

The online variant of the facility location problem was first studied in detail by Meyerson \cite{959917}. In this setting, a request, $x_i$, needs to be allotted \emph{on its arrival} to a facility, $F$, either an existing facility or a newly-opened facility. The cost incurred on allotting the request is the distance cost $d(x_i,F^{(i)})$ where $F^{(i)}$ refers to the set of facilities available at the time of servicing request $x_i$ (including possibly the newly-opened facility). As in the offline setting, opening new facilities costs $f$ for every facility opened. Meyerson \cite{959917} constructed a randomized algorithm, $RFL$ (Randomized Facility Location), which attains a competitive ratio of $\mathcal{O}(\log n)$ in expectation. Later, Fotakis \cite{Fotakis2008} proved that no online algorithm can attain a competitive ratio better than $\Omega\left(\frac{\log n}{\log \log n}\right)$ and also did an improved analysis proving that $RFL$ attains a competitive ratio of $\mathcal{O}\left(\frac{\log n}{\log \log n}\right)$ in expectation. 

Meyerson also showed that in the secretarial model (where the requests are adversarial but the arrival order is uniformly at random), his algorithm attains a competitive ratio of $8$.

\subsection{The two variants and our results}
Online facility location problem has many applications in real world. However, as observed by the vaccination example, some of these applications may have additional constraints on the classical online facility problem. In this paper, we study two variants incorporating these additional constraints.

\subsubsection{Variant I: Weighted Requests}

In our first variant, we consider a setting of weighted requests, where each request $x_i$ arrives with an additional weight $w_i$. One may view this weight $w_i$ as the number of times the request $x_i$ needs to be served or simply as a weight corresponding to a premium client. 

Formally, a sequence of requests $(x_i,w_i)$ arrive in an online fashion where $x_i$ is a point in the metric space $(\mathscr{M},d(\cdot\;,\;\cdot))$ while $w_i$ is a positive real number. On arrival, the request $(x_i,w_i)$ must be allocated to a facility, which could be one of the existing facilities or a newly-opened facility. As before, the facility opening cost is $f$ but the cost incurred on allocating request $(x_i,w_i)$  is $d(x_i,F^{(i)})\cdot w_i$, instead of the usual $d(x_i,F^{(i)})$. If $w_i\cdot d(x_i,\cdot)$ is also a metric, this can be handled by the classical case, but this may not always be the case. As usual, the goal is to minimize the total cost incurred which is given by the following expression:
\[f.|F|+\sum_{i=1}^n d(x_i,F)\cdot w_i\]

Note, that while our motivating example suggested that the weights $w_i$'s are positive integers (corresponding to the number of times a request needs to be served), the above formulation is more general and allows for any positive real number.

We first modify Fotakis' lower bound to show that in the worst-case setting, no online algorithm can attain a competitive ratio better than ${\Omega}(n)$ matching the na\"{i}ve algorithm that opens a new facility at every request. We then show that in the secretarial setting (when the requests are adversarial, but the ordering is random), there is an online algorithm that achieves a competitive ratio of $\mathcal{O}(\log n)$, which we show is tight once again by adapting Fotakis' lower bound. Observe that these results prove that this variant is provably different from the classical setting.

\subsubsection{Variant II: Congestion}
Our next variant incorporates the notion of congestion: if a facility is attending to multiple requests, there is an additional cost that it has to pay, depending on the number of requests it is attending to. 

Formally, a sequence of requests $x_i$ arrives and on arrival, the request must be allocated to a facility. Just as before, a facility may be opened in order to allocate $x_i$. A distance cost of $d(x_i,F^{(i)})$ is incurred and a facility opening cost of $f$ is incurred every time a facility is opened. As of yet, this is exactly same as the online facility location problem. However we incorporate an additional congestion cost. If the facility to which $x_i$ is allocated to has $k$ requests allotted to it after $x_i$ is allotted, then a congestion cost of $g(k)-g(k-1)$ is incurred in addition, where $g$ is a convex non-decreasing function, satisfying $g(0)=0$. We shall call $g$ as the congestion function. In this model, if a total of $\ell$ requests are allocated to a facility, a total congestion cost of $g(\ell)$ is incurred and the goal is to minimize the total cost incurred. Note that the total cost includes three components: facility-opening costs, allocation costs and congestion costs.

In this work, we prove results for the special case when $g$ is a monomial. We show that Meyerson's RFL can be modified to obtain a $\mathcal{O}(\log k^*/\log \log k^*)$ competitive ratio,  where $k^* = 2\cdot g^{-1}\left(\frac{f}{g(2)-2}\right)$. Note that this is independent of $n$, the number of requests. We also show that this is tight up to a constant by providing a matching lower bound.

A few words on the restriction of congestion function to monomials. Our results work for any convex non-decreasing congestion function $g$ that satisfies $g(0) =0$ and $g(a\cdot b)=g(a)\cdot g(b)$ for all $a,b$. Hence, the assumption that $g$ is a monomial. While the assumption that the congestion function $g$ is a monomial is restrictive, it is interesting that in this setting, we get tight matching upper and lower bounds. It would be interesting to extend this proof to more general convex functions.



\subsubsection{Proof Techniques}
While all the algorithms that we analyse are modifications of Meyerson's algorithm $RFL$ and the lower bound analyses are inspired by Fotakis' analysis, significant modifications are needed to handle each variant. 

For the weighted requests, as indicated earlier, we show that in the adversarial model, no algorithm can attain a competitive ratio better than $\Omega(n)$, which is trivially achievable by opening a facility for all the requests. Then we analyse this in the secretarial model, where we are able to connect this problem to a problem on increasing sequences in a uniformly random permutation. We obtain both the upper and lower bound of $\Theta(\log n)$ by studying properties of increasing sequences in random permutations. In particular, we reduce the analysis of the algorithm to the analysis of a process related to permutations, which we refer to as the \emph{Selection Process}, which might be of independent interest. The Selection Process has the property that if the input permutation is chosen adversarially, its cost can be as high as $\Omega(n)$ while on a random permutation, the expected cost drops to $\Theta(\log n)$. 

For the model with congestion, we first show that the optimal offline algorithm ($OPT$) will not allocate more than $k^*$ requests to a facility. Using this fact, we analyze any arbitrary facility $c^*$ opened by $OPT$. We analyze the cost incurred by our online algorithm over the requests allocated to $c^*$. We are able to split up this cost carefully into two components -- one component is simply the congestion cost while the other component has no congestion cost involved. Using the fact that the facility $c^*$ has at most $k^*$ requests allotted to it, we are able to attain a competitive ratio of $\mathcal{O}\left(\frac{\log k^*}{\log\log k^*}\right)$.

\subsection{Related Work}
Since many real-life problems are similar to facility location problems, it has received a lot of attention. It has even been used as a tool to solve other online problems (like \cite{1198387,10.1145}) . The survey by Fotakis\cite{fotsurvey} gives a nice depiction of the problem and the general techniques that one needs to be familiar with. Other modifications that have been studied, include a setting where the facilities can themselves be moved with some cost \cite{mobfac}, or the requests may be deleted \cite{del}. The latter paper also studies capacitated facility location, which assumes that each facility can attend to at most a fixed number of requests. This is very similar to our congestion model, but as the bounds suggest, these two problems seem to be different.

Similarly, there has been interest on making these algorithms deterministic. Anagnostopoulos, Bent, Upfal, and Hentenryck in their paper \cite{bentupfal} show that there is an $\mathcal{O}(\log n)$-competitive deterministic algorithm for online facility location. Later Fotakis in \cite{fotsurvey} showed a deterministic primal dual algorithm attaining  $\mathcal{O}\left(\frac{\log n}{\log \log n} \right)$ competitive ratio. More recently Kaplan, Naori, and Raz in \cite{recent} show that when the requests arrive uniformly at random, no online algorithm can achieve a competitive ratio better than $2$ and a modification of Meyerson's algorithm achieves a competitive ratio of $3$ in expectation. 

Another interesting topic in online algorithms is to improve the performance based on advice. This has been studied widely in \cite{advice,learning,learning2,learning3} and indeed one can have algorithms with better competitive ratio if the advice is good.

One thing to note is that in all of these papers, the algorithms in most cases are modifications of the original algorithm by Meyerson. The analysis is what makes these problems interesting. This is true for our results as well where the algorithms are adaptations of Meyerson's algorithm however the analysis needs some novel ideas.


\section{Facility Location with Weighted Requests}\label{sec:time}

In this model, we assume that input requests are ordered pairs of the form $(x_i,w_i)$ where $x_i$ describes the position of the request while $w_i$ is the weight of the request. This could be the model for how long the request stays in the system or the number of times the request has to be serviced. The corresponding cost incurred is $w_i\cdot d(x_i,F)$ (as opposed to just $d(x_i,F)$). Our results hold for arbitrary positive weight units, which need not be integers.

\begin{definition}[Weighted Online Facility Location]
A sequence of $n$ ordered pairs $(x_i,w_i)$ are given as input, where $x_i\in \mathscr{M}$ for some metric space $\mathscr{M}$. Each such request needs to be allocated to a facility in an online fashion. Opening a facility incurs a cost of $f$ and allocating a request $(x_i,w_i)$ to a facility $F$ incurs a cost of $d(x_i,F)\cdot w_i$. The goal is to open facilities and allocate the requests in an online fashion such that the total cost incurred is minimized.
\end{definition}
Here, the algorithm knows the metric space $\mathscr{M}$ but does not know $n$ beforehand.

It is not hard to show that in the worst case setting, one can not expect to achieve a competitive ratio better than $\Omega(n)$ (See \cref{lbwc} for more details), which is trivially achievable by opening a facility on every request.

In light of the above, we study this setting in the secretarial model. Informally speaking, the secretarial model is one where the requests may be adversarial but the order in which they appear is uniformly at random. This ensures that the arrival order of the input requests cannot be adversarial. In other words, while the $n$ input requests may be arbitrary, the secretarial model assumes that at each step all the remaining requests are equally likely to arrive as the next request. For the secretarial model, we show the following matching upper and lower bounds :

\begin{theorem}\label{secmodel}
In the online facility location problem with weighted requests, no online algorithm can obtain a competitive ratio better than $\Omega(\log n)$ in the secretarial model.
\end{theorem}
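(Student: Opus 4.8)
The plan is to invoke Yao's principle: it suffices to fix a single adversarial set of requests and show that, against a uniformly random arrival order, every deterministic online algorithm pays $\Omega(\log n)$ times the offline optimum. The construction adapts Fotakis' hierarchical lower bound but uses the weights to defeat the randomization of the order. I would place $n$ requests on a star metric with centre $O$: request $i$ sits alone on its own ray at distance $\delta_i$ from $O$ and carries weight $w_i$, with the weights decreasing geometrically, $w_1 > w_2 > \cdots > w_n$, and the radii $\delta_i$ increasing correspondingly. Each request has a natural critical radius $f/w_i$, the distance at which its weighted service cost equals the opening cost $f$. The parameters are tuned (a super-geometric separation, with the weight ratio $w_i/w_{i+1}$ growing like $2^{n}$) so that two properties hold simultaneously: (i) isolation — every lighter request $j>i$ satisfies $d(x_i,x_j) > f/w_i$, so all lighter requests lie outside request $i$'s critical radius; and (ii) cheap optimum — a single facility at $x_1$ (the heaviest, innermost point) serves the whole instance, the weighted service costs forming a convergent geometric series, so that $\mathrm{OPT}=O(f)$.

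The key quantity is the number of prefix maxima by weight in the random order. With the requests indexed so that $w_1 > \cdots > w_n$, call request $i$ a \emph{record} if it arrives before all heavier requests $1,\dots,i-1$. Since request $i$ is a record exactly when it is the first of the $i$ heaviest to arrive, the expected number of records is $\sum_{i=1}^{n} 1/i = H_n = \Theta(\log n)$. This is precisely the Selection Process alluded to earlier: an adversarial (increasing-weight) order produces $n$ records, whereas a random order produces only $\Theta(\log n)$.

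I would then argue that each record forces at least $f$ of cost, chargeable without overlap. Because a facility may only be opened at the location of an arriving request, when a record $i$ arrives every existing facility sits at a previously arrived request, all of which are lighter and hence, by isolation, at weighted distance exceeding $f$ from $x_i$; indeed serving $i$ at the nearest such facility costs $w_i\cdot d(x_i,\cdot) > w_i\cdot (f/w_i)=f$. The only affordable alternative is to open a fresh facility at $x_i$ at cost $f$. Since $x_i$ lies alone on its ray, that facility is useless to every other request, so the $f$ charged to distinct records never double-counts. Summing gives $\mathbb{E}[\mathrm{ALG}] \ge f\cdot\mathbb{E}[\#\text{records}] = f\,H_n$, and dividing by $\mathrm{OPT}=O(f)$ yields a competitive ratio of $\Omega(\log n)$, matching the upper bound and establishing tightness.

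The main obstacle is reconciling (i) and (ii): forcing a full $f$ at every record while keeping the optimum at $O(f)$. The tension is that the very geometry isolating each point from the lighter ones tends to make a single central facility expensive; resolving it requires the doubly-exponential separation above and, crucially, the modelling assumption that facilities open only at revealed request locations. This last point is what prevents the deterministic algorithm guaranteed by Yao's principle from simply pre-opening a facility at the centre $O$ — no request is ever located there — and it is the conceptual heart of why weights render the secretarial model as hard as $\log n$ even though the classical problem admits a constant-competitive algorithm.
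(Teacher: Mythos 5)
Your combinatorial core is the same as the paper's: the expected number of prefix maxima (``records'') in a uniformly random order is $H_n=\Theta(\log n)$, and each record is charged a cost of $f$ that cannot be double-counted. That part is sound and is exactly the paper's \cref{greedyinc}. The genuine gap is your reliance on the assumption that facilities may be opened only at already-revealed request locations. The model in this paper (and the standard one) allows a facility to be opened anywhere in the known metric space $\mathscr{M}$, and on your star instance this is fatal: a deterministic algorithm that opens a single facility at the centre $O$ upon the first arrival pays $f+\sum_i w_i\delta_i \le f+\mathrm{OPT}=O(f)$, since $d(x_i,O)=\delta_i\le d(x_i,x_1)$ for every $i$. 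So on your instance there exists a deterministic algorithm that is $O(1)$-competitive in every arrival order, and the Yao-style lower bound collapses. The same problem persists even if you delete $O$ and take $\mathscr{M}$ to be just the $n$ request points with the induced star distances: pre-opening a facility at $x_1$ (a point of the known metric space, whether or not it has arrived yet) serves every later record $i$ at weighted cost roughly $w_i\delta_i\approx f\,w_i/w_{i-1}\ll f$, again giving $O(f)$ total. The tension you identify between isolation and a cheap optimum is real, and your resolution (doubly-exponential weight separation) works only because you have quietly forbidden the one move that exploits it.

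The paper's construction is designed precisely to close this hole: the requests lie on a root-to-leaf path chosen uniformly at random from a binary tree with $2^{n-1}$ leaves, so there is no fixed point of $\mathscr{M}$ that is a good hub with non-negligible probability (the root is a terrible hub because the weights grow as $n^i$ while distances shrink as $f/n^i$, and the good hub --- the leaf --- is unknown). Algorithms that open facilities at unrequested locations are then handled separately by \cref{guess}, which shows that any facility opened speculatively in a subtree matches only $2$ future path nodes in expectation, so at least a constant fraction of the $\Theta(\log n)$ records still each cost $\Omega(f)$. To repair your proof you would need to replace the star by a construction with this kind of exponential uncertainty about where a cheap hub could be, at which point you essentially recover the paper's argument.
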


\begin{theorem}\label{thmtrfl}
For the online facility location problem with weighted requests, there exists an algorithm attaining a competitive ratio of $\mathscr{O}(\log n)$ in expectation under the secretarial model.
\end{theorem}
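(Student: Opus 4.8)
The plan is to analyze the natural weighted adaptation of Meyerson's $RFL$: upon arrival of $(x_i,w_i)$, let $\delta_i$ be the distance from $x_i$ to the nearest already-open facility (with $\delta_i=\infty$ if none exist); open a new facility at $x_i$ with probability $p_i=\min(w_i\delta_i/f,\,1)$, and otherwise assign $x_i$ to its nearest facility, paying $w_i\delta_i$. This choice balances the expected opening cost $p_i f$ against the expected service cost and is the weighted analogue of Meyerson's rule. Throughout, I would fix an optimal offline solution $OPT$ and partition the requests into clusters according to the facility $OPT$ assigns them to; writing $r_i=d(x_i,c^*)$ for the distance from request $i$ to its cluster's optimal center $c^*$, the optimal cost of a cluster is $O^*_c=f+\sum_{i\in c}w_i r_i$, and these costs sum to $OPT$. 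Since every facility the algorithm opens and every service cost it pays is attributable to a unique cluster, it suffices to show that the expected cost the algorithm incurs on each cluster is $\mathcal{O}(\log n)\cdot O^*_c$.

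First I would fix a single cluster and control the two contributions separately. For the service cost, the triangle inequality gives the standard observation that once a facility has been opened at some request $x_j$ of the cluster, every later request $i$ of the cluster can be served at distance $\delta_i\le d(x_i,x_j)\le r_i+r_j$; hence the only genuinely expensive requests are those arriving before a sufficiently central facility has been opened. For the opening cost, the expected facility-opening cost in the cluster is $\sum_{i\in c}\mathbb{E}[\min(w_i\delta_i,f)]$, so bounding the expected number of openings reduces to bounding how often $w_i\delta_i$ is comparable to $f$. Both quantities are therefore governed by the order in which requests of increasing radius and weight are revealed relative to the openings they trigger.

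The crux is to package this order-dependence into the Selection Process and to show it costs $\mathcal{O}(\log n)$ in expectation under a uniformly random arrival order. The idea is that a request of the cluster is costly---either triggering a fresh opening or paying a large service cost---essentially only when its intrinsic demand $w_i r_i$ is a left-to-right maximum among the requests of the cluster seen so far, since an earlier request with larger demand would already have opened (with high probability) a facility close enough to serve request $i$ cheaply; the number of such records in a uniformly random permutation of $m$ items is the harmonic number $H_m=\Theta(\log m)=\mathcal{O}(\log n)$. Carrying this through, the expected number of facilities opened per cluster is $\mathcal{O}(\log n)$, contributing $\mathcal{O}(\log n)\,f\le\mathcal{O}(\log n)\,O^*_c$, while the expected service cost accumulated before and between these records telescopes against $\sum_{i\in c}w_i r_i$, again up to an $\mathcal{O}(\log n)$ factor. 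Summing over clusters yields the claimed ratio, consistent with the matching $\Omega(\log n)$ lower bound of \cref{secmodel}.

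The main obstacle I anticipate is making the reduction to the Selection Process rigorous rather than heuristic, and two coupling issues must be handled. First, $\delta_i$ depends on facilities opened for \emph{other} clusters, not only the one under analysis, so I would upper bound $\delta_i$ by the distance to the nearest same-cluster opening (charging any cross-cluster savings separately) in order to decouple the clusters. Second, and more seriously, the Selection Process must track not merely the \emph{count} of openings (a pure records problem) but the actual accumulated \emph{service} cost, so the random-permutation analysis has to bound a weighted sum over records rather than their number. Establishing that this weighted sum is $\mathcal{O}(\log n)\cdot O^*_c$ in expectation---and exhibiting the adversarial order that drives the same process to $\Omega(n)$, which is what ties the analysis to the lower bound---is where the real work lies.
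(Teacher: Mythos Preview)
Your algorithm and the cluster-by-cluster decomposition match the paper exactly, but the structural key you propose---tracking left-to-right maxima of the \emph{demand} $w_i r_i$---is the wrong quantity, and the heuristic supporting it fails. Suppose an earlier request $j$ has $w_j r_j > w_i r_i$ only because $w_j$ is large while $r_j \ge r_i$; then even if $j$ opens a facility, that facility sits at distance $r_j$ from $c^*$, so by the triangle inequality $i$'s service distance is bounded only by $r_i + r_j$, and $w_i(r_i+r_j)$ can be arbitrarily large compared to $w_i r_i$. What actually controls the service cost of $i$ is having an open facility \emph{close to $c^*$}, i.e., one with small $r_j$, independent of weights.

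The paper accordingly orders the cluster by $r_i$ alone and defines a \emph{phase} to change whenever $WRFL$ opens a facility at a cluster request strictly closer to $c^*$ than every previously opened in-cluster facility. Within phase $i$ (current nearest facility $F_i$), requests with $d(x,c^*) \ge d(F_i,c^*)$ are served at expected cost at most $4$ times $OPT$'s distance cost (triangle inequality plus \cref{prop2}); requests with $d(x,c^*) < d(F_i,c^*)$ pay at most $f$ in expectation before one of them opens and ends the phase (\cref{prop1}). Hence the cluster cost is $\mathcal{O}(k)\cdot O^*_c$ where $k$ is the number of phases (\cref{lemma1}), and the Selection Process is used only to bound $\mathbb{E}[k]$---a pure counting problem, not the weighted-sum problem you anticipate. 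In that reduction the element labels are the ranks of the $r_i$, the selection probabilities are the facility-opening probabilities, and the hypothesis that makes \cref{assumeclaim} go through is that these probabilities are coordinate-wise \emph{non-increasing} in time (opening probability can only drop as more facilities appear); this monotonicity is essential (see \cref{whyvectareincr}) and is absent from your sketch.
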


The lower bound is deferred to \cref{sec:timed_lb} and the upper bound is proved in \cref{ubss}.

\section{Lower bounds for the weighted-request variants}\label{sec:timed_lb}

For the worst case setting, we shall work with a particular sequence of input requests. Then for the secretarial case, we shall use the same set of input requests assuming that the order of arrival of these requests is uniformly at random. Hence, we will first describe this set of requests.
\subsection{The input for the lower bounds}

The input requests will be constructed using a binary tree on a star metric. This is similar to the lower bound presented by Fotakis \cite{Fotakis2008}.

\begin{figure}[!ht]
\centering
\begin{framed}[0.7\textwidth]
\begin{adjustbox}{width=\textwidth}

\begin{tikzpicture}[line cap=round,line join=round,>=triangle 45,x=1cm,y=1cm]
\clip(-5.3,-5.2) rectangle (7.1,1.2);
\draw [line width=2pt] (0,0)-- (-3,-3);
\draw [line width=2pt] (-3,-3)-- (-4,-4);
\draw [line width=2pt] (-3,-3)-- (-2,-4);
\draw [line width=2pt] (-4,-4)-- (-4.6,-4.6);
\draw [line width=2pt] (-4,-4)-- (-3.4,-4.6);
\draw [line width=2pt] (-2,-4)-- (-2.6,-4.6);
\draw [line width=2pt] (-2,-4)-- (-1.4,-4.6);
\draw [line width=2pt] (0,0)-- (3,-3);
\draw [line width=2pt] (3,-3)-- (2,-4);
\draw [line width=2pt] (2,-4)-- (1.4,-4.6);
\draw [line width=2pt] (2,-4)-- (2.6,-4.6);
\draw [line width=2pt] (3,-3)-- (4,-4);
\draw [line width=2pt] (4,-4)-- (3.4,-4.6);
\draw [line width=2pt] (4,-4)-- (4.6,-4.6);
\draw [line width=2pt] (6,0)-- (6,-2.9);
\draw [line width=2pt] (6,-3.1)-- (6,-3.9);
\draw [line width=2pt] (6,-4.1)-- (6,-4.6);
\draw (6.1,-1.3) node[anchor=north west] {\LARGE$\nicefrac{f}{n}$};
\draw (6.1,-4.1) node[anchor=north west] {\LARGE$\nicefrac{f}{n^3}$};
\draw (6.1,-3.1) node[anchor=north west] {\LARGE$\nicefrac{f}{n^2}$};
\draw [line width=2pt] (5.9,0)-- (6.1,0);
\draw [line width=2pt] (5.9,-2.9)-- (6.1,-2.9);
\draw [line width=2pt] (5.9,-3.1)-- (6.1,-3.1);
\draw [line width=2pt] (5.9,-3.9)-- (6.1,-3.9);
\draw [line width=2pt] (5.9,-4.1)-- (6.1,-4.1);
\draw [line width=2pt] (5.9,-4.6)-- (6.1,-4.6);
\begin{scriptsize}
\draw [fill=black] (0,0) circle (3pt);
\draw [fill=black] (-3,-3) circle (3pt);
\draw [fill=black] (3,-3) circle (3pt);
\draw [fill=black] (-4,-4) circle (3pt);
\draw [fill=black] (-2,-4) circle (3pt);
\draw [fill=black] (2,-4) circle (3pt);
\draw [fill=black] (4,-4) circle (3pt);
\draw [fill=black] (-4.6,-4.6) circle (3pt);
\draw [fill=black] (-3.4,-4.6) circle (3pt);
\draw [fill=black] (-2.6,-4.6) circle (3pt);
\draw [fill=black] (-1.4,-4.6) circle (3pt);
\draw [fill=black] (1.4,-4.6) circle (3pt);
\draw [fill=black] (2.6,-4.6) circle (3pt);
\draw [fill=black] (3.4,-4.6) circle (3pt);
\draw [fill=black] (4.6,-4.6) circle (3pt);
\end{scriptsize}
\end{tikzpicture}
\end{adjustbox}\caption{}\label{bintree}
\end{framed}
\end{figure}

There will be requests along the nodes of the tree\footnote{The metric used for this analysis is the shortest path metric on the binary tree. However this can be embedded on the Euclidean Metric over $\mathbb{R}^1$. This was shown in \cite{Fotakis2008}.} in \cref{bintree}, forming a path, starting from the root node and all the way to a leaf node. In order to refer to this tree we shall use the term ``level''. The root node is at level $0$, the children of the root node are level $1$ and for any node, its level is $1$ added to the level of its parent node, all the way to the leaves being level $n-1$ nodes.

The tree will be such that the distance between consecutive nodes will keep decreasing: the distance between a node at level $i$ and a node at level $i+1$ is $\frac{f}{n^{i+1}}$. We shall have $1$ request at each node: the request at the $i$-th level node will have weight $n^i$.

For the worst case setting the requests will arrive in order of their levels - starting from the root node all the way to the leaf node. For the secretarial setting, the requests will be as mentioned but the arrival order will be uniformly at random, as we have mentioned.
\subsection{Lower Bound in the Worst Case Setting}\label{lbwc}

\begin{proposition}
No online algorithm can attain a competitive ratio better than $\Omega(n)$.
\end{proposition}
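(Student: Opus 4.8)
The plan is to show that $\mathrm{OPT}=\Theta(f)$ while forcing every online algorithm to pay $\Omega(nf)$, by choosing the root-to-leaf path adaptively. First I would pin down the optimum. For any fixed root-to-leaf path $v_0,\dots,v_{n-1}$, opening a single facility at the deepest node $v_{n-1}$ already serves everything: the request at level $i$ has weight $n^i$ and distance $d(v_i,v_{n-1})=f\sum_{l>i}n^{-l}$, so its contribution is $n^i\,d(v_i,v_{n-1})=f\sum_{k\ge 1}n^{-k}\le f/(n-1)$; summing over the $n$ levels gives service cost $\le \tfrac{n}{n-1}f\le 2f$, so $\mathrm{OPT}\le 3f$, while trivially $\mathrm{OPT}\ge f$. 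Hence it suffices to force the online cost to be $\Omega(nf)$.

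The key geometric observation is that the parent edge $v_{i-1}v_i$ has length exactly $f/n^i$, whereas the entire subtree hanging below $v_i$ has diameter $<f/n^i$ (another geometric sum). Thus reaching any point that is not below $v_i$ requires crossing the parent edge, so it is at distance $\ge f/n^i$. Consequently an algorithm can serve the weight-$n^i$ request at $v_i$ for cost $<f$ only if it has already placed a facility in the ball $B_i:=B(v_i,f/n^i)$, i.e.\ essentially inside the subtree of $v_i$; otherwise the service cost is $\ge n^i\cdot f/n^i=f$.

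This suggests an adaptive adversary: having revealed $v_0,\dots,v_i$ and observed the algorithm's current facilities, it continues the path into the child of $v_i$ whose subtree currently holds fewer of them. Since the two children's subtree-regions are disjoint and $B_{i+1}$ is contained in the chosen child's region (plus the connecting edge, but never $v_i$ itself, anything above it, nor the sibling), the number of facilities useful for the next request is at least halved. Writing $m_i$ for the number of facilities lying in $B_i$ just after $v_i$ is served and $o_i$ for the facilities newly opened at step $i$, I would establish the recursion $m_{i+1}\le m_i/2+o_{i+1}$. Telescoping this over all levels bounds the total number of facilities below by $\tfrac12\sum_i m_i$, hence by $\tfrac12 P$ where $P$ is the number of levels with $m_i\ge 1$; meanwhile each of the remaining $Z=n-P$ levels has $m_i=0$ and so incurs service cost $\ge f$. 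Adding the facility cost $\ge\tfrac12 Pf$ to the service cost $\ge Zf$ gives a total of $\ge\tfrac12 nf$, which divided by $\mathrm{OPT}=\Theta(f)$ yields the claimed $\Omega(n)$.

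The main obstacle, and where care is needed, is the bookkeeping of facilities that the algorithm may place \emph{anywhere} in the metric (not only at request nodes) and possibly deep in the tree in anticipation of future requests. I must argue that such foresight is useless because the adversary routes the path away from any committed facility, and I must treat facilities sitting exactly at $v_i$ or on the connecting edges correctly, so that both the containment $B_{i+1}\subseteq(\text{chosen child's region})$ and the disjointness of the two children's regions hold. Verifying these inclusions via the diameter bound $<f/n^i$ and confirming that the halving survives the per-step openings is the crux; once $m_{i+1}\le m_i/2+o_{i+1}$ is justified, the summation and the final arithmetic are routine.
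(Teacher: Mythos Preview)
Your argument is correct and uses the same construction and the same adaptive adversary as the paper: the binary tree with level-$i$ edge length $f/n^{i}$ and weight $n^{i}$, with the adversary steering the path into the child whose subtree contains fewer facilities. The $\mathrm{OPT}\le 3f$ calculation is identical. Where you differ is in the accounting when the algorithm may open several facilities in anticipation. The paper handles this by a direct counting observation (if $k$ facilities are opened at a step, there are $2^{k-1}\ge k$ subtrees $k{-}1$ levels below, so one is empty and the adversary heads there), concluding informally that each level still costs at least $f$. You instead run a potential argument: with $m_i$ the number of facilities in $B_i=B(v_i,f/n^{i})$ after step $i$, the halving choice gives $m_{i+1}\le m_i/2+o_{i+1}$, and summing yields $\sum_i o_i\ge\tfrac12\sum_i\mathbf{1}[m_i\ge 1]$; combined with the service cost $\ge f$ at every level with $m_i=0$, this gives total cost $\ge nf/2$. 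Your potential-function bookkeeping is more quantitative than the paper's and handles facilities placed at arbitrary metric points (not just tree vertices) more transparently. One point left implicit in both write-ups: the argument as stated is against an adaptive adversary, so extending the bound to randomized algorithms under an oblivious adversary still needs a Yao-type step with a uniformly random root-to-leaf path, exactly as the paper remarks.
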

\begin{proof}
For the mentioned input requests, an offline algorithm $OFF$ may open $1$ facility at the leaf node and in which case the distance of the root node to facility is at most $2\frac{f}{n}$ and for a node in the $i$-th level, its distance will be at most $2\frac{f}{n^{i+1}}$ from the leaf node. However taking into account the weight, the cost for the $i$-th level node is at most $2\frac{f}{n^{i+1}}. n^{i}=2\frac{f}{n}$  Hence the total cost paid by $OFF$ is less than $f+\sum_{i=0}^{n-1}2\frac{f}{n}=f+2f=3f$.

Let us now consider the performance of an online algorithm on this input. At every level the algorithm must either
\begin{itemize}
    \item Open a new facility
    \item Or pay a distance cost (taking into account the weight of the request) to its parent node at the very least
\end{itemize}
This is because, when opening a new facility, the online algorithm may either open a facility on the node of the current request (say $v_i$) or try to guess future nodes and open a facility somewhere in the subtree of the current level node request. Both of these will result in a new facility nonetheless. Now if the algorithm tries to guess and open a future node in the subtree (let us say left sub tree of $v_i$) instead of opening a facility at the current node, the adversary will select the next node for input request from the other subtree (right subtree of $v_i$) thus adversarially ensuring that guessing never gives the algorithm a correct future node.

This ensures that the algorithm does not have any available facility in the subtree of the current request. Hence, if the algorithm decides to pay a distance cost, on an $i$-th level request, instead of opening a facility anywhere, it must pay a distance cost at least to its parent, $\frac{f}{n^{i}}$ with weight $n^{i}$ resulting in a cost of $f$. The other case consists of opening a facility and in that case the algorithm incurs a cost of $f$ anyway. Therefore the algorithm incurs a cost of $f$ for each level of the tree, resulting in a total cost of $nf$.

One needs to also consider the case where the algorithm opens multiple facilities on a requested node ensuring that both the subtrees have facilities in them. However in this case the algorithm has already paid the facility opening cost multiple times. If the algorithm opens $k$ facilities on a requested $i$-th level node node (for $k\geq 2$), the algorithm cannot ensure that all the nodes in the $i+(k-1)$-th level below have facilities in their subtrees (since there are $2^{k-1}$ nodes at that level and $2^{k-1}\geq k$). This ensures that opening more than one facility on one node request is not beneficial for the algorithm.

Now let us compare the cost of the optimal offline algorithm, $OPT$, to the cost of an online algorithm. As we have already seen, the cost of any online algorithm is at least $fn$ but the cost of $OPT$ is at most $3f$ ensuring that $\frac{C_{A}}{C_{OPT}}\geq \frac{fn}{3f}=\frac{n}{3}$ for any online algorithm $A$.

Notice that this proof works against deterministic algorithms only. In order to prove this for randomized algorithms, we may use Yao's principle where as input we select one path among the $2^{n-1}$ paths uniformly at random. Alternatively, one can do this using the same technique (\cref{guess}, in particular) used in \cref{secmodel}, hence we have skipped the proof over here.
\end{proof}

A competitive ratio of $n$ is trivially achievable by opening a facility on every request. However now, the naturally interesting question is whether an online algorithm can perform better if the input is secretarial.

\subsection{Lower Bound in the Secretarial Setting}\label{lbss}
\begin{definition}
In the secretarial model, we assume that the adversary decides the input requests but not their arrival order. Instead after the adversary has decided the set of input requests, the arrival order of the requests is uniformly at random.
\end{definition}

Now, we shall give a proof for the lower bound in this setting.

\begin{proof}[Proof of {\cref{secmodel}}]
As mentioned earlier, we shall provide the same set of input requests but now the arrival order will be uniformly at random.

Notice that this makes the problem easier for an online algorithm in the sense that if an $i$-th level node is the first request to arrive, all the requests of level $0,1,...,i-1$ may be deduced by the online algorithm from the $i$-th level node and hence it will not have to open facilities for those requests when they arrive. Therefore we will not try to lower bound the cost of the algorithm on a request at level $i$ if another request of higher level had already arrived earlier. However every time a node is requested such that no node below it has been requested yet, we shall ensure that the online algorithm has to pay some cost.

First we shall look at online algorithms, $A$, such that if it opens a facility on a request, it will open only at the location of the current request.

\begin{claim}\label{noguessalgo}
Let $A$ be an online algorithm such that on input requests, $A$ would either open a facility at the request location or not open a facility at all. Then $A$ cannot have a competitive ratio better than $\Omega(\log n)$.
\end{claim}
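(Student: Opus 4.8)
The plan is to reduce the lower bound to counting left-to-right maxima (records) in a uniformly random permutation. I would begin by pinning down the offline benchmark: regardless of arrival order, an offline algorithm can open a single facility at the leaf, and exactly as in the worst-case analysis the weighted allocation cost of the level-$i$ request to the leaf is at most $2f/n$, which sums to at most $2f$ over the $n$ levels; adding the single opening cost $f$ gives $OPT \le 3f$ in the secretarial model as well, since this solution does not depend on the order in which requests arrive.

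Next I would isolate the requests on which $A$ is forced to pay. Call a level-$i$ request a \emph{record} if it arrives before every request of level $> i$; equivalently, the records are precisely the left-to-right maxima of the sequence of levels read in arrival order. The structural point is that the requests all lie on a single root-to-leaf path, and a level-$i$ request reveals only the locations of levels $0,\dots,i$, so a no-guess algorithm can have placed facilities only at nodes already requested. At the instant a record at level $i$ arrives, every previously requested node has level $< i$ (otherwise $i$ would not be a record), hence no facility sits at level $i$ or anywhere in the subtree below it.

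I would then charge $A$ a cost of at least $f$ at each record. Upon the arrival of a record at level $i$, the algorithm either opens a facility at the current location and pays $f$, or it allocates the request to an existing facility at some level $j < i$; the most favourable such facility is the parent at level $i-1$, at distance $f/n^i$, so the weighted allocation cost is at least $n^i \cdot f/n^i = f$, and any facility higher up is strictly farther. Since the openings charged occur at distinct levels (distinct facilities) and the allocation charges are attributed to distinct requests, these charges are genuinely additive, giving a total cost of at least $f$ times the number of records.

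Finally I would invoke the classical fact that the expected number of left-to-right maxima in a uniformly random permutation of $n$ distinct elements equals the harmonic number $H_n = \sum_{k=1}^n 1/k = \Theta(\log n)$, because the $k$-th arriving request is a record with probability exactly $1/k$. Combining the pieces yields $\mathbb{E}[\text{cost of } A] \ge f H_n$ against $OPT \le 3f$, for a competitive ratio of at least $H_n/3 = \Omega(\log n)$. I expect the delicate step to be the charging argument: one must argue carefully that the no-guess hypothesis forces every open facility to lie strictly above a fresh record, and that no opening cost or allocation cost is double-counted across records. This is exactly where the no-guess restriction is essential, and it is also the reason the unrestricted case (permitting guessing) must be treated by a separate argument in the full proof of \cref{secmodel}.
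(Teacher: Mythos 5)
Your proposal is correct and follows essentially the same route as the paper: the same binary-tree instance with $OPT \le 3f$, the same identification of forced-payment requests as left-to-right maxima of the level sequence (the paper's ``highest-levelled node seen so far''), the same charge of $f$ per record (parent distance $f/n^i$ times weight $n^i$), and the same harmonic-number computation $\sum_k 1/k = \Theta(\log n)$ for the expected number of records (the paper's \cref{greedyinc}). Your explicit attention to non-double-counting of the charges is a point the paper leaves implicit, but the argument is otherwise identical.
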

\begin{proof}[Proof of Claim]
As discussed earlier, the input will be composed in the following fashion :
\begin{itemize}
    \item For an input of size $n$, we shall consider a binary tree on $n$ levels : $T_n$.
    \item Then we shall choose a leaf node uniformly at random among the $2^{n-1}$ leaf nodes.
    \item There is a unique $n$ length path connecting the root node to this leaf node.
    \item The input will comprise of requests on these nodes such that a request on the $i$-th level node will have weight of $n^i$.
    \item Also the distance between the $i-1$-th level node and the $i$-th level node will be $\frac{f}{n^i}$.
\end{itemize}
Notice that since we are in the secretarial model, these requested nodes (with corresponding weights) will arrive uniformly at random. Let us suppose that there are $k$ requests $v_1,v_2,...,v_k$ such that on arrival of the request $v_i$, it is the highest levelled node (or in other words, the lowest placed node in the tree structure) that has arrived till then.

Then for each of those requests, $v_i$, the online algorithm must either open a facility there or pay a distance cost with weight. Also, we have the guarantee that the nearest open facility available to the requested node is at least as far away as the distance to its parent node and hence when multiplied by the weight, the cost is exactly $f$, which is the same cost incurred for opening a facility. Therefore every time a request arrives such that no node of higher level has arrived yet, the online algorithm has to pay a cost of $f$. To finish off our proof we just need the following lemma :
\begin{lemma}\label{greedyinc}
Let $S:=[n]$ be the set containing the numbers $1$ through $n$. Let $\pi$ be a uniformly at random permutation of the set $S$. Let $k_\pi$ be the number of times that the $i$-th element in the permutation $\pi$ is the largest element observed till then. Then $\mathbb{E}_\pi[k_\pi]=\Theta(\log n)$.
\end{lemma}
\begin{proof}
Let $X_i$ be a random variable such that
\[   
X_i = 
     \begin{cases}
       1 &\quad\text{if $\pi_i>\pi_j$ for all $j<i$}\\
       $0$ &\quad\text{otherwise.} \\ 
     \end{cases}
\]
Notice that $\mathcal{P}[X_n=1]=\frac{1}{n}$ since $X_n$ will be $1$ if and only if $\pi_n=n$ which has probability $\frac{1}{n}$. Also, conditioned on $\pi_{i+1},\pi_{i+2},...,\pi_n$, the probability that $X_i=1$ is $\frac{1}{i}$ since no matter what $\pi_{i+1},...,\pi_n$ are, $\pi_i$ will be the largest among $\pi_1,\pi_2,...\pi_i$ with probability $\frac{1}{i}$. Hence $\mathbb{E}[k_\pi]=\mathbb{E}[\sum X_i]=\sum\mathbb{E}[X_i]=\sum\frac{1}{i}=\Theta(\log n)$.
\end{proof}

This \cref{greedyinc} gives us that in the secretarial input model, in expectation there will be $\Theta(\log n)$ requested nodes such that they were the highest levelled nodes on their arrival. Hence cost of online algorithm $A$ would be $\Omega (f\cdot\log n)$ compared to the offline optimal algorithm whose cost is at most $3f$ as we had calculated earlier (Here we are using the fact that the cost of the offline algorithm is the same for the secretarial input model as was for the non-secretarial model). Therefore the competitive ratio of online algorithm $A$ is $\Omega(\log n)$.
\end{proof}
Now we consider the case that an online algorithm may not only open facilities at requested nodes but it may also open a facility at a nearby location instead of opening the facility exactly on the request. This potentially allows the online algorithm to guess future nodes and maybe reduce the cost. We shall look at the same input that we worked with, in the \cref{noguessalgo}, and observe that such a guessing algorithm cannot do much better in expectation.

\begin{lemma}\label{guess}
Let $T_n$ be a binary tree with a path chosen uniformly at random from the root node to a leaf node. If a node, $v$, of this path is revealed, and an algorithm $A$ opens a facility in the subtree of $v$, the guess will match with the actual path on $2$ nodes (apart from $v$) in expectation.
\end{lemma}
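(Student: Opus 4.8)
The plan is to reduce the statement to a one-line computation with a geometric random variable. First I would record the key structural fact: selecting the root-to-leaf path uniformly among the $2^{n-1}$ leaves is the same as performing, at each level, an independent fair coin flip choosing the left or the right child. Conditioning on the revealed node $v$ (say at level $i$) being on the path merely fixes the flips at levels $1,\dots,i$; the remaining portion of the path through the subtree of $v$ is still an independent sequence of fair coin flips, distributed uniformly over the $2^{\,n-1-i}$ leaves of that subtree. This independence — that revealing $v$ tells the algorithm nothing about the descent below $v$ — is exactly what prevents guessing from helping.

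Next I would fix the descendant $u$ of $v$ at which $A$ opens its facility. Since $u$ is a fixed node once $A$ has committed, it determines a fixed guessed path from $v$ down to $u$, and the nodes on which the guess agrees with the actual path are precisely the longest common prefix of this guessed path and the random continuation below $v$. At each successive level the actual path agrees with the fixed guess with probability exactly $1/2$, independently across levels, so if $T$ denotes the first level below $v$ at which the two paths diverge then $\mathbb{P}[T\ge k]=(1/2)^{\,k-1}$. Summing the tail gives $\mathbb{E}[T]=\sum_{k\ge 1}(1/2)^{\,k-1}=2$, which is the asserted count (equivalently, the common prefix itself has expected length $\mathbb{E}[T]-1=1$, and the guess stops being useful at the branching node counted by $T$).

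The only point requiring care is that $A$ is adaptive and may place $u$ as deep as it likes — even at a leaf — so that the guessed path is long. I would dispose of this by observing that the computation above is an upper bound uniform in the choice of $u$: if $u$ sits at level $j<n-1$ the matched count is the above geometric variable truncated at $j-i$, which only decreases the expectation, while taking $u$ to be a leaf recovers the full geometric variable with mean $2$. Hence no placement of the facility lets the guess coincide with more than a constant number (in expectation, $2$) of nodes of the actual path. The main obstacle is thus conceptual rather than computational — correctly isolating the independence of the descent below $v$ from everything the algorithm has already seen — and once this is in hand the result feeds directly into the lower bound argument, where it shows that allowing the algorithm to guess future nodes saves only $O(1)$ facility openings and therefore cannot beat the $\Omega(\log n)$ bound obtained via \cref{greedyinc}.
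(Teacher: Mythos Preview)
Your approach is essentially the paper's: both argue that each successive level below $v$ is matched independently with probability $1/2$ and then sum the resulting geometric-type series; your added care with the conditioning on $v$ and with truncation when $u$ is not a leaf is a nice touch that the paper leaves implicit.

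One small wrinkle worth cleaning up: your $T$ is the first level of \emph{divergence}, so the number of matched nodes apart from $v$ is $T-1$, with expectation $1$ rather than $2$ --- as you yourself note in the parenthetical. The lemma's ``$2$'' is only an upper bound (the paper's own computation $\sum_i i\,2^{-i}=2$ is not tight either, since it uses $\mathbb{P}[X=i]\le 2^{-i}$ rather than the exact $\mathbb{P}[X=i]=2^{-(i+1)}$), so your argument certainly establishes it; but identifying $\mathbb{E}[T]=2$ with ``the asserted count'' is an off-by-one.
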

\begin{proof}
Since the path was chosen uniformly at random, $A$ can guess the next level node with probability $1/2$. Also, $2$ nodes can be guessed correctly by $A$ with probability $1/4$ and continuing in this fashion, the probability that $A$'s guess matches with the path on $i$ nodes is at most $\frac{1}{2^i}$.

Therefore the expected number of nodes that $A$ can guess correctly by opening a facility is at most 
$$S:=\sum_{i=1}^\infty \frac{i}{2^i}$$
Firstly, we can use Ratio Test to observe that this series is actually converging. Then we notice that $$2S=\sum_{i=1}^\infty \frac{i}{2^{i-1}}=\sum_{i=0}^\infty \frac{i+1}{2^i}=1+\sum_{i=1}^\infty \frac{i+1}{2^i}=1+\sum_{i=1}^\infty \frac{i}{2^i}+\sum_{i=1}^\infty \frac{1}{2^i}=1+S+1$$
$$\implies S=2\;.$$
\end{proof}

This means that every time the algorithm $A$ receives a node which is the highest level till then, it has the choice to pay a distance cost with weight, or open a facility at the request itself or opening a facility in the subtree of the node, guessing future nodes. We have already seen the competitive ratio for algorithms that do not guess future nodes. Let $k$ be the number of nodes observed such that they were the highest levelled nodes seen on their arrival. Then we had seen that an algorithm that does not guess incurs a cost of $f$ each time resulting in a cost of $k\cdot f$ and then we had observed that the expected value of $k$ is of the order of $\log n$ to obtain the final competitive ratio.

Now however the algorithm may guess the future nodes and potentially incur a less cost. Let $k$ be the number of nodes observed such that they were the highest levelled nodes seen on their arrival. Then we can show that $A$ must incur a cost of $f\cdot\frac{k}{3}$ at least, in expectation. If $A$ incurs a cost less than $f\cdot\frac{k}{3}$, then algorithm $A$ has opened $<\frac{k}{3}$ facilities, and hence has $<\frac{k}{3}$ guesses. Therefore the total number of nodes that $A$ has guessed correctly is $<2\cdot\frac{k}{3}$ in expectation (Using \cref{guess}). Note that on the arrival of node, say $v$, even if the algorithm $A$ might guess a node, say $v_1$, correctly (Where $v_1$ lies in the subtree of $v$), it may not reduce the cost incurred by $A$. This is because the next highest levelled node to arrive might be a child of $v_1$ in which case $A$ needs to pay a cost of $f$ anyway. In other words, correctly guessing a node might not reduce the number of highest levelled nodes observed, because the nodes guessed correctly might not be the highest levelled nodes to arrive. Therefore, taking into consideration the $<\frac{k}{3}$ nodes that $A$ has paid for and the nodes that $A$ has potentially guessed correctly, $A$ has satisfied $<\frac{k}{3}+\frac{2k}{3}=k$ nodes in expectation (even after considering facilities opened, correctly guessed nodes and distance costs paid). This cannot be the case since $A$ has to satisfy all the $k$ nodes which were the highest levelled nodes on arrival.

Therefore $A$ must incur a cost of $f\cdot\frac{k}{3}$ at least, in expectation. Since $k=\Theta(\log n)$ in expectation (Using \cref{greedyinc}), $A$ must incur a cost of $\Omega(\log n)$ in expectation.

\end{proof}

\section{Upper Bound for Weighted-Requests in the Secretarial Setting}\label{ubss}

Now we shall focus our attention on the setting with weighted requests and present the proof of \cref{thmtrfl}.

In order to show this we will reduce the problem to a completely different problem, which we shall call \textit{The Selection Process}. This is a novel idea and connecting the Weighted Facility Location Problem to the Selection Process is what allows us to get a tight upper bound of $\mathcal{O}(\log n)$. 

\begin{definition}[The Selection Process]\label{selprob}
Let $S=[n]$. A sequence of $(p_i,j_i)$ arrive in an online fashion where $p_i\in[0,1]^n$ is a vector of probabilities and $j_i\in S$ for $i\in[n]$. The first pair to arrive is $(p_1,j_1)$ and we shall \textbf{select} the element $j_1$ with probability $p_{1,j_1}$.

Similarly, on input $(p_i,j_i)$, we shall select the element $j_i$ with probability $p_{i,j_i}$ but only if the element $j_i$ is greater than the previously selected numbers.

Also, for all $1\leq i\leq n-1$, and $j_i\in S$, we shall assume that $p_{i,j_i}\geq p_{i+1,j_i}$. The final quantity that we want to compute is the expected number of selected elements.
\end{definition}

Let us first consider the case when the adversary selects both the $p_i$'s and the $j_i$'s. In this setting one can see that in the worst case, there  can be as many as $n$ elements selected. For this, consider the probability vectors $p_i$'s to be all $1$'s throughout and the element arrival order be $j_1=1,j_2=2,...,j_n=n$. In this case, all the elements will be selected. 

Hence, let us consider the case when the adversary decides the probability vectors $p_i$'s (satisfying the constraint of $p_{i,j_i}\geq p_{i+1,j_i}$) however the arrival order of the requests is uniformly at random. That is, we take a permutation, $\pi$ of $[n]$ uniformly at random and at step $i$, the element appearing is $\pi_i$. In this secretarial setting, one can show that no matter what probabilities the adversary chooses, the number of selected elements is $\mathscr{O}(\log n)$ in expectation.

\begin{claim}\label{assumeclaim}
The number of selected elements in The Selection Process is $\mathscr{O}(\log n)$ in expectation.
\end{claim}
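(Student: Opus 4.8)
The plan is to compute the expected number of selections by linearity over the elements, exploiting a clean description of \emph{when} a given value is selected. The key structural fact is that a value $v\in[n]$ is selected if and only if $v$ is the first among the values $\{v,v+1,\dots,n\}$, in order of arrival, to pass its coin. Indeed, the candidacy of $v$ (being larger than every previously selected value) can be destroyed only by the selection of some value $\geq v$, so $v$ remains selectable exactly until the first instant a value $\geq v$ is selected; and as long as none of $\{v,\dots,n\}$ has been selected, every one of them is a candidate, so the first of them to pass its coin is selected. That element is $v$ precisely on the event just described. Writing $E_v$ for this event, linearity gives $\mathbb{E}[T]=\sum_{v=1}^n\Pr[E_v]$, where $T$ is the number of selected elements. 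I also record that each value is coin-flipped exactly once, at its unique arrival, so the only coin relevant to value $v$ succeeds with probability $p_{\mathrm{pos}(v),v}$, where $\mathrm{pos}(v)$ is $v$'s (random) arrival position.

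The heart of the argument is the time-independent case $p_{k,v}\equiv q_v$, which already satisfies the monotonicity constraint. Here passing is an independent $\mathrm{Bernoulli}(q_v)$ per value, so conditioning on the random set $A$ of values $>v$ that pass, the probability that $v$ passes and arrives before all of $A$ is $q_v\,\mathbb{E}[1/(1+|A|)]$. Using $\mathbb{E}[1/(1+B)]=\int_0^1\mathbb{E}[t^{B}]\,dt$ for a sum $B$ of independent Bernoullis, this equals $q_v\int_0^1\prod_{u>v}(1-q_us)\,ds$. Summing over $v$, the integrand telescopes: with $P_v:=\prod_{u\geq v}(1-q_us)$ one gets $\sum_v q_v\prod_{u>v}(1-q_us)=(1-P_1)/s\leq 1/s$, so that $\mathbb{E}[T]=\int_0^1\frac{1-\prod_u(1-q_us)}{s}\,ds$. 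Bounding $1-\prod_u(1-q_us)\leq\min\{1,\,s\sum_u q_u\}$ and splitting the integral at $s=1/\sum_u q_u$ yields $\mathbb{E}[T]\leq 1+\ln\!\big(\sum_u q_u\big)\leq 1+\ln n=\mathscr{O}(\log n)$, consistent with the record count of \cref{greedyinc}.

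To reach the genuinely time-dependent (but column-monotone) case I would first recast the coins as thresholds. Since value $v$ is flipped only once, I can couple it to a single uniform variable under which $v$ passes iff $\mathrm{pos}(v)\leq\tau_v$, where $\tau_v$ has tail $\Pr[\tau_v\geq k]=p_{k,v}$; the hypothesis $p_{k,v}\geq p_{k+1,v}$ is exactly what makes this tail nonincreasing and hence $\tau_v$ a legitimate threshold. Under this encoding the permutation of arrival positions is uniform and independent of the independent family $\{\tau_v\}$, and $E_v$ becomes: $v$ arrives within its own threshold, while no larger value arriving before $v$ arrives within its threshold.

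The main obstacle is that the telescoping of the second step used a \emph{single} parameter $q$ for both the passing of $v$ and the blocking by each $u>v$, whereas now these success probabilities depend on arrival positions. I expect the naive fix—replacing each blocker's coin by its minimum $p_{n,u}$ and $v$'s coin by its maximum $p_{1,v}$—to fail, because the resulting ratio $p_{1,v}/p_{n,v}\geq 1$ destroys the clean cancellation and is far too lossy. The correct route is to keep the coins tied to positions and to use monotonicity to supply the needed correlation: whenever $v$ arrives early (so its coin is large), any larger value that arrived still earlier had an \emph{even larger} coin, keeping the blocking probabilities high. Concretely, I would try to show by a position-by-position coupling—or an exchange argument that raises each later coin toward its earlier value while tracking the effect on $T$—that the time-dependent expectation is dominated by a time-independent instance, to which the computation of the second step then applies. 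This domination is where I anticipate the real work.
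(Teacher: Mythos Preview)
Your structural characterisation—that $v$ is selected iff it is the first of $\{v,\dots,n\}$ in arrival order to pass its coin—is correct, and the time-independent computation is clean and complete. The gap is precisely where you locate it: the reduction of the general column-monotone case to a time-independent instance is never carried out, and the exchange you propose (``raise each later coin toward its earlier value'') runs into a real obstacle. The expected number of selections is \emph{not} monotone in the coin probabilities. For instance with $n=3$ and arrival order $3,1,2$: if all three coins pass then only $3$ is selected, whereas if $3$'s coin fails and the other two pass then both $1$ and $2$ are selected. So raising a coin can strictly decrease $T$, and the exchange does not yield an inequality in the direction you need; the symmetric move of lowering early coins is blocked for the same reason. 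A second issue is that your threshold coupling presumes the vectors $p_i$ are fixed in advance, while in the intended application they are adaptive (they depend on which facilities have opened, hence on earlier coin outcomes), so the independent-threshold picture with $\Pr[\tau_v\ge k]=p_{k,v}$ is not even well-defined there.

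The paper takes a different route that sidesteps monotonicity entirely. It factors $p_{i,j}=\prod_{k\le i}\tfrac{p_{k,j}}{p_{k-1,j}}$ and realises element $j$'s coin as an AND of row-by-row Bernoulli bits $A[k][j]$; once a column sees a $0$ that element is permanently dead. The key lemma (\cref{imdone}) is that at the moment of the \emph{first} selection, either at least $n/3$ columns are already dead, or the selected value is at least $n/3$ in expectation (because if $\ge 2n/3$ columns are still all-ones, the first arrival among them is uniform on a set of $\ge 2n/3$ distinct integers). Either way a constant fraction of the remaining elements becomes ineligible—dead columns on one hand, values below the selected one on the other—so iterating gives $\mathscr{O}(\log n)$ selections. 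This amortised ``one-third shrinkage per selection'' is the idea your proposal is missing; it replaces the comparison-to-time-independent step you could not close.
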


We shall prove \cref{assumeclaim} later.

\begin{remark}
Notice that the selection process does not necessarily choose the longest increasing subsequence since the longest increasing subsequence in a uniformly permuted array, is of the size of $\Theta(\sqrt{n})$. It is actually enough to observe that this is $\Omega(n)$, which follows from Erd\H{o}s–Szekeres theorem \cite{erdos}.

On the other hand, \cref{greedyinc} tells us that greedily selecting the largest element gives us an increasing subsequence of size $\Theta(\log n)$.

However the adversary may manipulate the probability vectors $p_i$'s in a manner to make the largest increasing subsequences to be more likely to be picked. Indeed, in \cref{whyvectareincr}, we show that if the probability vectors aren't enforced to be non-increasing, the adversary can make sure that there are $\Theta(\sqrt n)$ selections in expectation. This is why the \cref{assumeclaim} is of utmost importance.
\end{remark}

\begin{remark}\label{whyvectareincr}
The fact that for all $j\in S$, and $1\leq i\leq n-1$, we require $p_{i,j}\geq p_{i+1,j}$ is necessary. One may look at the example where the first $\sqrt{n}$ probability vectors are such that they have $1$ in their first $\sqrt{n}$ terms and $0$ everywhere else. The next $\sqrt{n}$ probability vectors have their  first $\sqrt{n}$ terms $0$, next $\sqrt{n}$ terms as $1$ and all $0$'s after that. This continues for $\sqrt{n}$ many blocks of probability vectors, each with their corresponding block of $1$'s and remaining $0$'s. It is easy to see that given these probability vectors and a uniform arrival order of elements, one can expect $\Theta(\sqrt{n})$ elements to be selected in The Selection Process and $\mathcal{O}(\log n)$ would have been unachievable.
\end{remark}

First, let us present our algorithm for facility location with weighted requests. Then we shall analyse the competitive ratio and see where The Selection Process comes in.

\begin{algorithm}[H]
  \caption{Weighted Randomized Facility Location (WRFL)}\label{WRFL}
  \begin{algorithmic}[1]
    \Procedure{$WRFL$}{$x_i$,$w_i$}\Comment{Input request is $x_i$ with weight $w_i$}
      \State $F_i\gets \texttt{Nearest facility to }x_i$
      \State $p_i\gets \min\{1,\frac{d(x_i,F_i)\times w_i}{f}\}$
      \State With probability $p_i$ open a new facility at $x_i$ and allocate $x_i$ to it
      \State With probability $1-p_i$, assign $x_i$ to $F_i$
    \EndProcedure
  \end{algorithmic}
\end{algorithm}

While algorithm $WRFL$ is a simple modification of \cref{RFL} (Meyerson's Algorithm $RFL$, which we describe later), the analysis is completely different. The analysis of $RFL$ does not go through because of the weight component. This is also apparent from the fact that the competitive ratios are of different order : $\log n$ for $WRFL$ while $\frac{\log n}{\log \log n}$ for $RFL$. Hence the entire structure of the analysis is different.

Now, we shall prove \cref{thmtrfl} for the \cref{WRFL} in particular.

\begin{proof}[Proof of {\cref{thmtrfl}}]
In order to analyse the algorithm, we shall divide the run of the algorithm into what we shall define as phases. However before that, we shall set up some notations and observe a few properties.

Let $c^*$ be a facility opened by $OPT$ and let $S_{c^*}$ be the set of requests that $OPT$ assigns to $c^*$. Also when we say that an algorithm incurs a cost of $C$ over a subset of requests, say $T$, we will look at each request in the subset $T$ and note how much the algorithm paid for that request when allocating it- be it distance cost or facility opening cost. For example, if the algorithm allocates the request to a facility which was already opened on a previous request, not from $T$, then we just take into account the distance cost without considering the facility opening cost for that facility.

We may focus our attention on one facility $c^*$, opened by $OPT$ and look at the set of requests $S_{c^*}$ that $OPT$ assigns to $c^*$. Using \cref{clm1}, we can say that analysing one such cluster is sufficient for our analysis. For the sake of convenience, we shall refer to this set of requests as $S$, whenever we mean $S_{c^*}$.

Now, in order to analyze this, we shall define the phases of the algorithm.
\begin{definition}
Corresponding to the facility $c^*$, opened by $OPT$, the run of the algorithm $WRFL$ can be divided into phases as follows :
\begin{itemize}
    \item The algorithm is said to be in Phase $0$ until $WRFL$ opens a facility, say $F_1$, on a request in $S$, at which point phase $1$ starts.
    \item Phase $1$ starts when $WRFL$ opens its first facility on a request of $S$ (which we called $F_1$) and continues until $WRFL$ opens another facility, say $F_2$ on a request of $S$ and this facility is closer to $c^*$ than $F_1$.
    
    \hspace{2cm}\vdots
    \item Phase $i$ starts when $WRFL$ opens a facility $F_i$ and continues until $WRFL$ opens another facility, say $F_{i+1}$, on a request of $S$ and this facility, $F_{i+1}$, is closer to $c^*$ than $F_i$.
    
\end{itemize}
\end{definition}

Observe that this definition of Phase is valid for the given cluster of requests $S$. With respect to another cluster, the phases would be different. Also, the algorithm does not know about the phases, it is only well defined during the analysis of the algorithm. Just to get a clear idea, we can note that the number of phases depends not only on which requests $WRFL$ opens a facility on, but it also depends on the arrival order of the requests.

\begin{figure}[!ht]
\begin{framed}

\definecolor{ffqqqq}{rgb}{1,0,0}
\definecolor{rvwvcq}{rgb}{0.08235294117647059,0.396078431372549,0.7529411764705882}
\begin{adjustbox}{width=0.9\textwidth}
\begin{tikzpicture}[line cap=round,line join=round,>=triangle 45,x=1cm,y=1cm]

\clip(-1.1234957010748252,-4.909014115479848) rectangle (17.014579996386516,4.774774523513391);
\draw [shift={(0,0)},line width=2pt]  plot[domain=-0.5235987755982991:0.5235987755982988,variable=\t]({1*5*cos(\t r)+0*5*sin(\t r)},{0*5*cos(\t r)+1*5*sin(\t r)});
\draw [shift={(0,0)},line width=2pt]  plot[domain=-0.5235987755982991:0.5235987755982988,variable=\t]({1*9*cos(\t r)+0*9*sin(\t r)},{0*9*cos(\t r)+1*9*sin(\t r)});
\begin{scriptsize}
\draw [fill=rvwvcq] (0,0) circle (2.5pt);
\draw[color=rvwvcq] (0.37,0.20) node {\LARGE\LARGE$c^*$};
\draw [fill=rvwvcq] (8.37,3.3) circle (2.5pt);
\draw[color=rvwvcq] (8.7,3.3
) node {\LARGE$F_{i}$};
\draw [fill=rvwvcq] (4.7,-1.7) circle (2.5pt);
\draw[color=rvwvcq] (5.4
,-1.8) node {\LARGE$F_{i+1}$};
\draw [fill=ffqqqq] (3.1,0.4) circle (2.5pt);
\draw [fill=ffqqqq] (7.5,-3.2) circle (2.5pt);
\draw [fill=ffqqqq] (5.1,0.8) circle (2.5pt);
\draw [fill=ffqqqq] (10.2,2.7) circle (2.5pt);
\draw [fill=ffqqqq] (10.5,-1.2) circle (2.5pt);
\draw [fill=ffqqqq] (12.7,-4.1) circle (2.5pt);
\draw [fill=ffqqqq] (13.8,2.7) circle (2.5pt);
\end{scriptsize}
\end{tikzpicture}

\end{adjustbox}\caption{Phase $i$ starts when Facility $F_i$ is opened and ends when Facility $F_{i+1}$ is opened where by definition $F_{i+1}$ should be closer to $c^*$ than $F_i$. All the requests in Phase $i$ are shown in red dots. They may be closer or farther away from $F_{i}$ with respect to $c^*$. The only guarantee that we have is that the first request closer than $F_i$ that results in a facility opening would change the phase to $i+1$.}
\end{framed}
\end{figure}

\begin{lemma}\label{lemma1}
Consider a set of requests $S$ assigned to the same facility by $OPT$. Assume that the request arrivals and $WRFL$'s random choices are such that there are $k$ phases, by our definition. Then the cost ratio, for the requests in $S$, of $WRFL$ to $OPT$ is $\mathscr{O}(k)$ in expectation.
\end{lemma}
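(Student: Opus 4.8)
The plan is to charge WRFL's expected cost over $S$, phase by phase, against $\mathrm{OPT}$'s cost $f + D$ on $S$, where $D := \sum_{x\in S} w_x\, d(x,c^*)$. First I would record a single per-request estimate. When a request $(x,w)$ is processed, let $p$ be its opening probability and let $\delta$ be the distance from $x$ to the nearest currently open facility, so $p=\min\{1,w\delta/f\}$. Splitting into the two cases $w\delta\le f$ and $w\delta>f$ and computing $\mathbb{E}[\text{cost}]=pf+(1-p)\,w\delta$ directly, one sees that WRFL's expected cost on this single request is at most $2pf$ and, simultaneously, at most $2w\delta$. These two faces of the same estimate drive the whole argument.

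Next I would fix a phase $i$ and set $r_i:=d(F_i,c^*)$; by the definition of phases, $r_1>r_2>\cdots>r_k$. For any request $x$ arriving in phase $i$ the facility $F_i$ is already open, so $\delta\le d(x,F_i)\le d(x,c^*)+r_i$ by the triangle inequality. I would split the phase-$i$ requests into \emph{far} ones with $d(x,c^*)\ge r_i$ and \emph{near} ones with $d(x,c^*)<r_i$. For a far request the estimate $2w\delta\le 2w\bigl(d(x,c^*)+r_i\bigr)\le 4w\,d(x,c^*)$ charges its expected cost directly to $\mathrm{OPT}$; summed over all phases the far requests contribute at most $4D$.

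The heart of the argument — and the step I expect to be the obstacle — is the near requests. Because WRFL only ever opens a facility \emph{at} the request location, a near request that opens a facility creates a facility strictly closer to $c^*$ than $F_i$, which by definition ends phase $i$; hence phase $i$ terminates exactly at the first near request that opens. I would condition on all randomness except the opening coins of the near requests of phase $i$. Given this conditioning the arrival order is fixed and, for each near request reached before the phase ends, the set of available facilities (which cannot contain any near phase-$i$ facility, as that would already have ended the phase) is determined; so each near request $x_j$ has a well-defined conditional opening probability $q_j$, and the phase ends at the first success in an independent sequence. Combining the per-request bound $2q_j f$ with the telescoping identity $\sum_j q_j\prod_{l<j}(1-q_l)=1-\prod_l(1-q_l)\le 1$ shows the expected total cost of the near requests of the phase is at most $2f$. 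The same telescoping argument, applied to all $S$-requests, bounds the expected cost of Phase $0$ by $2f$ as well.

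Finally I would sum over phases: each of the at most $k+1$ phases contributes expected cost at most $2f$ from its near (resp. Phase-$0$) requests, while the far requests contribute at most $4D$ in total, so WRFL's expected cost on $S$ is at most $2(k+1)f+4D$. Since $\mathrm{OPT}$ pays $f+D$ on $S$, the ratio is $\mathscr{O}(k)$, as claimed. The delicate points to get right are the conditioning that makes the near-request opening coins independent with well-defined probabilities, and checking that any facility opened earlier within a phase (necessarily at distance $\ge r_i$ from $c^*$) does not disturb the ``first near opening ends the phase'' bookkeeping.
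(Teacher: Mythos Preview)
Your proposal is correct and follows essentially the same approach as the paper. Your near/far split is exactly the paper's Type~1/Type~2 split, your bound $\mathbb{E}[\text{cost}]\le 2w\delta$ is the paper's Property~\ref{prop2}, and your telescoping argument for the near requests is the content of the paper's Property~\ref{prop1} (the paper states the $\le f$ bound on cost-before-opening and then adds $f$ for the opening itself, whereas you fold both into a single $2f$ via the $2q_jf$ per-request bound); the resulting totals $2(k{+}1)f+4D$ versus the paper's $2kf+4C$ differ only in bookkeeping.
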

\begin{proof}
Let us first compute the cost of $OPT$ over the requests in $S$. We note first that $OPT$ has only opened one facility for these requests, at $c^*$. This constitutes a cost of $f$. Other than this, $OPT$ incurs a cost of $C:=\sum_{x_i\in S}d(x_i,c^*)\cdot w_i$\;.

To complete the proof, we shall use two properties in order to analyse the cost of $WRFL$.
\begin{property}\label{prop1}
Let $X = \{(x_1,w_1),(x_2,w_2),\ldots, (x_m,w_m)\}$ be a subsequence of $m$ requests. Then $WRFL$, in expectation, pays a cost $\leq f$ before it opens the first new facility in $X$.

(This holds even if there are several open facilities at the beginning and even if new facilities are opened between two requests in the subsequence)
\end{property}
\begin{property}\label{prop2}
On input request $(x_i,w_i)$, let $F$ be the nearest facility to $x_i$. Then the cost that $WRFL$ pays in expectation for the request $(x_i,w_i)$ is at most $2d(x_i,F)\cdot w_i$.
\end{property}

The proofs of these two Properties have been studied in earlier works (e.g. \cite{Fotakis2008,fotsurvey}), but for the sake of completion we have added them in \cref{sec:misproof}.

Using \cref{prop1}, algorithm $WRFL$ would incur a cost of $f$ in expectation before it opens a facility on a request of $S$. This would result in phase changing from Phase $0$ to Phase $1$. Now let us say that we are at Phase $i$ and we want to estimate $WRFL$'s cost over the requests in Phase $i$, where Phase $i$ started when $WRFL$ opened a facility $F_i$. Here $F_i$ is closer than $F_{i-1}$ to $c^*$, by construction.

\textit{(Type $1$ requests)} Let $S_1$ be the set of requests such that $d(c^*,x)<d(c^*,F_i)$ for all $x\in S_1$. Then using \cref{prop1}, algorithm $WRFL$ pays a cost of $f$ in expectation over requests in $S_1$ before it opens a facility $F_{i+1}$ on a request with $d(c^*,F_{i+1})<d(c^*,F_i)$. This event of opening a facility ends up changing the phase from $i$ to $i+1$. Hence for phase $i$, requests closer to $c^*$ than $F_i$ incur a cost of at most $f$ in expectation.

\textit{(Type $2$ requests)} Let $S_2$ be the set of requests such that $d(c^*,x)\geq d(c^*,F_i)$ for all $x\in S_2$. Now observe that $OPT$ incurs a  cost of $d(c^*,x)\cdot w_x$ for the request $x$ (Where $w_x$ denotes the weight of the request $x$). Let us now try to estimate the cost incurred by $WRFL$ for this request $x$. Here, $WRFL$ has an open facility $F_i$ and $d(F_i,x)\leq d(x,c^*)+d(c^*,F_i)$ using triangle inequality. Also, $d(c^*,x)\geq d(c^*,F_i)$ by assumption. So $d(F_i,x)\leq d(x,c^*)+d(x,c^*)=2d(x,c^*)$. Therefore the nearest facility to $x$ for $WRFL$ is at most at a distance of $2d(x,c^*)$. Hence by \cref{prop2}, the cost incurred by $WRFL$ for the request $x$ is at most $4d(x,c^*)\times w_x$ in expectation, which is $4$ times the cost incurred by $OPT$ itself.

Hence the total cost of Type $2$ requests for $WRFL$ is at most $4C$ in expectation (Where $OPT$ had incurred a cost of $f+C$ for the requests in $S$). On the other hand, $WRFL$ incurs a cost of $f$ for Type $1$ requests in expectation, for each phase. Also $WRFL$ incurs a facility opening cost of $f$ for each phase. Therefore the total cost incurred by $WRFL$ is at most $2k\cdot f+4C$ in expectation. This results in an expected competitive ratio of $2k+4=\mathscr{O}(k)$.

\end{proof}
\begin{lemma}\label{lemma2}
The expected number of phases is $\mathscr{O}(\log n)$.
\end{lemma}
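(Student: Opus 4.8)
The plan is to reduce the count of phases to the count of selected elements in the \emph{Selection Process} (\cref{selprob}) and then invoke \cref{assumeclaim}. The first step is to reinterpret a phase as a ``record''. Order the requests of $S$ by their distance to $c^*$, and declare one request to be \emph{larger} than another when it is closer to $c^*$. By construction each phase-defining facility $F_{i+1}$ is required to be strictly closer to $c^*$ than $F_i$, which is itself the closest facility opened by $WRFL$ on a request of $S$ so far; hence the facilities $F_1,F_2,\ldots$ are exactly the left-to-right closeness-maxima among the facilities that $WRFL$ opens on requests of $S$, and the number of phases equals the number of such records. A useful observation to record here is that an opening which is \emph{not} a record never becomes the running closest facility, so deleting it changes neither the set of records nor which later openings are records; thus one may reason about records as though only record-openings had occurred.

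Next I would set up the reduction explicitly. Let $m=|S|$ and assign to each request of $S$ a value in $[m]$ by closeness to $c^*$, the closest request receiving value $m$. In the secretarial model the relative arrival order of the requests of $S$ is a uniformly random permutation of $[m]$ (the marginal of the uniform order on all $n$ requests). When a request $j$ arrives at step $i$, \cref{WRFL} opens a facility at $j$ with probability $p_{i,j}=\min\{1,\,d(x_j,F)\cdot w_j/f\}$, where $F$ is the nearest open facility. A phase change happens precisely when $WRFL$ opens a facility at a request whose value exceeds that of every previous phase-defining request, which is exactly the event that this element is \emph{selected} in the Selection Process (coin success conditioned on being larger than all previously selected elements). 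Consequently the number of phases equals the number of selected elements for this instance.

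It then remains to check the hypotheses of \cref{assumeclaim}. The arrival order is uniform, as required. The essential monotonicity $p_{i,j}\ge p_{i+1,j}$ holds because facilities are never closed: the set of open facilities only grows with time, so for a fixed request $j$ the distance $d(x_j,F)$ to its nearest facility is non-increasing in $i$, and therefore so is the opening probability. Applying \cref{assumeclaim} yields that the expected number of selections is $\mathscr{O}(\log m)$, and since $m\le n$ this is $\mathscr{O}(\log n)$; by the correspondence above the expected number of phases is $\mathscr{O}(\log n)$.

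The main obstacle is that the probability vectors produced by this reduction are not fixed in advance: $p_{i,j}$ depends on the realized history, namely which facilities happen to be open, including those opened for requests outside $S$ and for non-record requests of $S$. So strictly speaking the reduction outputs a Selection Process instance whose probability vectors are revealed adaptively rather than chosen offline. The saving feature is that the monotonicity $p_{i,j}\ge p_{i+1,j}$ holds \emph{pointwise}, along every sample path, precisely because the open-facility set is monotone; thus a valid non-increasing instance arises on every realization, and the bound of \cref{assumeclaim} can be applied conditionally and then averaged. Making this last point airtight --- arguing that \cref{assumeclaim} survives when the non-increasing probabilities are disclosed online --- is the delicate part of the argument, and I would either phrase \cref{assumeclaim} so that its proof only uses the per-element monotonicity (which holds here), or carry out an explicit conditioning on the history.
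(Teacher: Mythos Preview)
Your proposal is correct and follows essentially the same reduction as the paper: order the requests of $S$ by closeness to $c^*$, identify phase changes with selections in the Selection Process, verify the monotonicity $p_{i,j}\ge p_{i+1,j}$ via the monotone growth of the open-facility set, and invoke \cref{assumeclaim}. You are in fact more careful than the paper about the adaptivity of the probability vectors; the paper simply asserts that the reduction ``still adheres to the Selection Process, as long as the requests arrive uniformly at random'' without spelling out the pointwise/conditioning argument you sketch in your final paragraph.
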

\begin{proof}
In order to view this we will reduce the problem to The Selection Process (\cref{selprob}), which we had mentioned earlier.

We shall also assume the \cref{assumeclaim} for now and finish the proof of \cref{lemma2}, after which we shall give a formal proof of \cref{assumeclaim}.  Let the requests in $S$ be ordered in decreasing distance from $c^*$ and labelled accordingly. Thus $x_1$ is the request farthest from $c^*$, $x_2$ is the second farthest request and so on till $x_n$ which is the request nearest to $c^*$.

Just before the first request from $S$ arrives online, there may be open facilities due to other requests. Given the configuration of the facilities (opened by $WRFL$) at this stage, each request $x_i\in S$ has a certain probability of opening a facility if it were the first request to arrive in $S$. This vector of probabilities will be our $p_1$ with $p_{1,i}$ being the probability for $x_i$ opening a facility, if it was the first request to arrive. Now one of the requests $x_{\pi_1}$, arrives uniformly at random. The index of the request arriving corresponds to the element arriving in The Selection Process. Note that the vector of probabilities thus generated, depends on the previous requests and whether these requests have opened new facilities or not. However, this still adheres to the Selection Process, as long as the requests arrive uniformly at random.

At any stage when the $i$-th request $x_{\pi_i}$ arrives, we can similarly compute the vector $p_i$ and also this request would change the phase only if a facility is opened there and also if it is nearer to $c^*$ than all the other opened facilities, which corresponds to the index of the request being larger than the indices of all the opened facilities. So, any time the phase changes, a facility is opened which is nearer to $c^*$ than all previously opened facilities. This corresponds to the element being selected in the Selection Process.

Also while these probability vectors may be arbitrary, the probability of opening a facility on a particular request can only decrease over time. This is because if the request arrives later on, it can potentially have nearby facilities, reducing the probability of opening a facility but this probability of opening a facility cannot increase with time, implying $p_{i,j}\geq p_{i+1,j}$.

Therefore, if we can show that the number of selected elements in The Selection Process is $\mathscr{O}(\log n)$ in expectation, then we would have shown that the number of phases is $\mathscr{O}(\log n)$ in expectation, as was required.

\end{proof}

Using \cref{lemma1,lemma2}, it follows that the competitive ratio of $WRFL$ is $\mathscr{O}(\log n)$.

\end{proof}

\begin{proof}[Proof of {\cref{assumeclaim}}]
In order to prove this, we shall reduce the problem one more time.

Notice that the $i$-th element to arrive is $\pi_i$ and is selected with probability $p_{i,\pi_i}$, if no larger element has yet been selected. We can view the selection probability of $\pi_i$ as $$p_{i,\pi_i}=p_{1,\pi_i}\times\frac{p_{2,\pi_i}}{p_{1,\pi_i}}\times\frac{p_{3,\pi_i}}{p_{2,\pi_i}}\times...\times\frac{p_{i,\pi_i}}{p_{i-1,\pi_i}}\;.$$
Since $p_{j,\ell}\geq p_{j+1,\ell}$ for all $j$ (by definition), each of the terms $\frac{p_{j+1,\pi_i}}{p_{j,\pi_i}}$ is less than or equal to $1$. Hence we can view the terms, $\frac{p_{j+1,\pi_i}}{p_{j,\pi_i}}$, as probabilities.

Now just before the first element $\pi_1$ arrives, we can toss random coins for each $i$ such that the $i$-th coin is $1$ with probability $p_{1,i}$. Given that $\pi_1$ is the first element to arrive, we can look at the realization of $p_{1,\pi_1}$ and select it if it is $1$.

Now just before the second element arrives, we can toss random coins for each $i$ such that the $i$-th coin is $1$ with probability $\frac{p_{2,i}}{p_{1,i}}$. Now, after observing $\pi_2$, the second element to arrive, we can look at the realizations of $p_{1,\pi_2}$ and $\frac{p_{2,\pi_2}}{p_{1,\pi_2}}$ and select the element $\pi_2$ only if both the entries are $1$, and additionally if $\pi_2>\pi_1$. Notice that $\pi_2$ is selected with probability $p_{2,\pi_2}$ in this manner, as needed.

Similarly this process continues for the $i$-th element arriving, $\pi_i$, for all $i\geq 3$. Here we toss random coins for each element $i$ that are $1$ with probability $\frac{p_{i,\pi_i}}{p_{i-1,\pi_i}}$ and we take into account the realizations of the previous coin tosses of probability $\frac{p_{i',\pi_i}}{p_{i'-1,\pi_i}}$ (for all $i'<i$). The element $\pi_i$ is selected only if all of the realizations are $1$ and $\pi_i$ is larger than all selected elements. One can see that selecting the elements in this manner produces the same probability of an element being selected.

However now we can state that an element $\pi_i$ that has not yet arrived will certainly not be selected on its arrival if even one of the realizations $\frac{p_{i',\pi_i}}{p_{i'-1,\pi_i}}$ or $p_{1,\pi_i}$ happens to be $0$ (Where $i'<i$, since $\pi_i$ has not yet arrived), also we can say that the element $\pi_i$ will certainly not be selected on its arrival if an element larger than $\pi_i$ has already been selected.

Let us take one such array $A[.][.]$ of realizations of the probabilities. Where $A[i][j]$ is $1$ with probability $\frac{p_{i,j}}{p_{i-1,j}}$. On the arrival of an element, $\pi_i$, we look at the $\pi_i$-th column and select the element $\pi_i$ only if all the first $i$ entries of the column are $1$. Hence once a $0$ appears in a column, that element will certainly not be selected and instead of tossing further coins for the element, we shall set all the remaining entries in the column to be $0$. This maneuver does not alter the probability of any arriving element being selected. Also if a larger element had already been selected, the corresponding element will certainly not be selected.
\begin{figure}[!ht]
\begin{framed}[0.7\textwidth]
\centering
\begin{adjustbox}{width=\textwidth}
\begin{tikzpicture}[line cap=round,line join=round,>=triangle 45,x=1cm,y=1cm]
\clip(-3.5,-16.9) rectangle (18,0.9);

\draw [line width=2pt] (0,0)-- (16,0);
\draw [line width=2pt] (16,0)-- (16,-16);
\draw [line width=2pt] (16,-16)-- (0,-16);
\draw [line width=2pt] (0,-16)-- (0,0);
\draw [line width=2pt] (2,0)-- (2,-16);
\draw [line width=2pt] (4,0)-- (4,-16);
\draw [line width=2pt] (6,0)-- (6,-16);
\draw [line width=2pt] (8,0)-- (8,-16);
\draw [line width=2pt] (14,0)-- (14,-16);
\draw [line width=2pt] (0,-2)-- (16,-2);
\draw [line width=2pt] (0,-4)-- (16,-4);
\draw [line width=2pt] (0,-14)-- (16,-14);
\draw [line width=2pt] (0,-6)-- (16,-6);
\draw [line width=2pt] (0,-8)-- (16,-8);
\draw (-1.5,-0.6) node[anchor=north west] {\Huge$p_1$};
\node[label=below:\rotatebox{-90}{\Huge{Outcome of coin toss of $i$th iteration$\longrightarrow$}}] at (-2.5,-0.2) {};
\draw (-1.5,-2.46) node[anchor=north west] {\Huge$\frac{p_2}{p_1}$};
\draw (-1.5,-4.46) node[anchor=north west] {\Huge$\frac{p_3}{p_2}$};
\draw (-1.5,-6.46) node[anchor=north west] {\Huge$\frac{p_4}{p_3}$};
\draw [line width=2pt] (10,0)-- (10,-16);
\draw [line width=2pt] (12,0)-- (12,-16);
\draw [line width=2pt] (0,-10)-- (16,-10);
\draw [line width=2pt] (0,-12)-- (16,-12);
\draw (-1.5,-8.46) node[anchor=north west] {\Huge$\frac{p_5}{p_4}$};
\draw (-1.9,-14.46) node[anchor=north west] {\Huge$\frac{p_n}{p_{n-1}}$};
\draw (5.94,-10.3) node[anchor=north west] {\Huge$\frac{p_{i,j}}{p_{i-1,j}}$};
\draw (0.75,1) node[anchor=north west] {\Huge{Element $j\longrightarrow$}};
\draw (2.75,-0.6) node[anchor=north west] {\LARGE$1$};
\draw (4.75,-0.6) node[anchor=north west] {\LARGE$1$};
\draw (6.75,-0.6) node[anchor=north west] {\LARGE$1$};
\draw (0.75,-0.6) node[anchor=north west] {\LARGE$1$};
\draw (2.75,-2.6) node[anchor=north west] {\LARGE$1$};
\draw (4.75,-2.6) node[anchor=north west] {\LARGE$1$};
\draw (2.75,-4.6) node[anchor=north west] {\LARGE$1$};
\draw (10.75,-2.6) node[anchor=north west] {\LARGE$1$};
\draw (10.75,-0.6) node[anchor=north west] {\LARGE$1$};
\draw (14.75,-0.6) node[anchor=north west] {\LARGE$1$};
\draw (12.75,-0.6) node[anchor=north west] {\LARGE$1$};
\draw (14.75,-2.6) node[anchor=north west] {\LARGE$1$};
\draw (14.75,-4.6) node[anchor=north west] {\LARGE$1$};
\draw (14.75,-6.6) node[anchor=north west] {\LARGE$1$};
\draw (6.75,-2.6) node[anchor=north west] {\LARGE$1$};
\draw (6.75,-4.6) node[anchor=north west] {\LARGE$1$};
\draw (2.75,-6.6) node[anchor=north west] {\LARGE$1$};
\draw (2.75,-8.6) node[anchor=north west] {\LARGE$1$};
\draw (0.75,-2.6) node[anchor=north west] {\LARGE$0$};
\draw (2.75,-10.6) node[anchor=north west] {\LARGE$0$};
\draw (4.75,-4.6) node[anchor=north west] {\LARGE$0$};
\draw (6.75,-6.6) node[anchor=north west] {\LARGE$0$};
\draw (8.75,-0.6) node[anchor=north west] {\LARGE$0$};
\draw (10.75,-4.6) node[anchor=north west] {\LARGE$0$};
\draw (12.75,-2.6) node[anchor=north west] {\LARGE$0$};
\draw (14.75,-8.6) node[anchor=north west] {\LARGE$0$};
\end{tikzpicture}
\end{adjustbox}\caption{Array $A$ of probabilities}
\end{framed}
\end{figure}
\begin{claim}\label{imdone}
Let $i$ be the first index when selection happens and $\pi_i$ be the first element to be selected. Then either $\mathbb{E}[\pi_i]\geq \frac{n}{3}$ or the $i$-th row of $A$ has at least $\frac{n}{3}$ entries as $0$.
\end{claim}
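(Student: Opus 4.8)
The plan is to recast the whole process in terms of the array $A$ and reduce the statement to a clean symmetry computation. For each column (element) $j$, let $T_j$ denote the number of leading $1$'s in column $j$ of the (zero-propagated) array $A$; that is, $T_j$ is the largest $r$ with $A[1][j]=\cdots=A[r][j]=1$, and $T_j=0$ if $A[1][j]=0$. Since an element $v$ arriving at time $t$ is selectable only when the first $t$ entries of its column are all $1$, and since for the \emph{first} selection the ``larger than all previously selected'' constraint is vacuous, the first selection occurs at the first position $t$ with $T_{\pi_t}\ge t$, and the selected element $\pi_i$ satisfies $T_{\pi_i}\ge i$. Writing $L_i:=\{v\in[n]:T_v\ge i\}$ for the set of elements whose column is all $1$'s through row $i$, the number of zeros in the $i$-th row of $A$ is exactly $n-|L_i|$, and $\pi_i\in L_i$.

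With this dictionary the claim becomes a dichotomy on $|L_i|$. If $|L_i|\le \tfrac{2n}{3}$ then the $i$-th row has at least $\tfrac{n}{3}$ zeros and the second alternative holds, so I may assume $|L_i|>\tfrac{2n}{3}$ and aim to prove $\mathbb{E}[\pi_i]\ge \tfrac{n}{3}$, where the expectation is understood conditioned on the event $E_i$ that the first selection happens at position $i$ (as built into the phrasing ``let $i$ be the first index when selection happens'').

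The heart of the argument is to show that, conditioned on $E_i$, the selected value $\pi_i$ is distributed \emph{uniformly} over $L_i$. First I would record the key structural fact: under $E_i$ no element of $L_i$ can occupy any of the positions $1,\dots,i-1$, because any $v\in L_i$ has $T_v\ge i>j$ for every $j<i$, so had $v$ arrived at some earlier position $j$ it would have been selectable and hence selected there, contradicting the minimality of $i$. Consequently the first $i-1$ arrivals consist entirely of elements outside $L_i$, and the only constraint that $E_i$ places on position $i$ is membership in $L_i$. I would then exploit symmetry: for any two $u,v\in L_i$, swapping the positions of $u$ and $v$ throughout the permutation is a measure-preserving involution on uniform permutations that maps $\{E_i,\ \pi_i=u\}$ bijectively onto $\{E_i,\ \pi_i=v\}$; the swap never touches positions $1,\dots,i-1$ (neither $u$ nor $v$ lies there), so it preserves $E_i$. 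Hence $\mathbb{P}[\pi_i=u\mid E_i]=\mathbb{P}[\pi_i=v\mid E_i]$ for all $u,v\in L_i$, i.e. $\pi_i$ is uniform on $L_i$.

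Finally, uniformity gives $\mathbb{E}[\pi_i\mid E_i]=\frac{1}{|L_i|}\sum_{v\in L_i}v\ge \frac{|L_i|+1}{2}$, since the average of any $|L_i|$ distinct elements of $[n]$ is minimized by taking $\{1,\dots,|L_i|\}$. With $|L_i|>\tfrac{2n}{3}$ this is strictly larger than $\tfrac{n}{3}$, completing the case analysis. The main obstacle I anticipate is making the symmetry step watertight — in particular verifying that swapping $u$ and $v$ genuinely preserves $E_i$ — and this is precisely why the preliminary observation that $L_i$ avoids the first $i-1$ positions is essential.
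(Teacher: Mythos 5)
Your proposal is correct and follows essentially the same route as the paper: identify the set of columns that are all $1$'s through row $i$ (your $L_i$, the paper's $c_1,\dots,c_k$), argue that conditioned on the first selection occurring at position $i$ the selected element is uniform on that set, and lower-bound the mean of $k$ distinct elements of $[n]$ by $\frac{k+1}{2}$. Your swap involution is just a more explicit justification of the uniformity step that the paper asserts verbally ("all the elements $c_j$ are equally likely to be the $i$-th element"), so no substantive difference.
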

\begin{proof}
Let the $i$-th row of $A$ have at least $\frac{2n}{3}$ entries as $1$. This means that the columns corresponding to those $\frac{2n}{3}$ entries have $1$ throughout, for the first $i$ entries. We shall then conclude that $\mathbb{E}[\pi_i]\geq \frac{n}{3}$ to complete this proof.

Let $c_1,c_2,...,c_k$ be the columns of $A$ with $1$'s throughout till the $i$-th row. By our assumption, $k\geq \frac{2n}{3}$. Under the condition that the first element to be selected is the $i$-th element, the expected value of the selected element equals
$$\mathbb{E}[\pi_i]\stackrel{(1)}{=}\frac{1}{k}\sum _{j=1}^k c_j\stackrel{(2)}{\geq} \frac{1}{k}\sum _{j=1}^k j=\frac{k+1}{2}\geq \frac{\frac{2n}{3}+1}{2}\geq \frac{n}{3}$$
Equality $(1)$ follows from the fact that because of the secretarial model, all the elements $c_j$ are equally likely to be the $i$-th element. Since the $i$-th element, $\pi_i$, is the first selected element, $\pi_i$ has to be one of the elements $\{c_1,c_2,...,c_k\}$ only. Also none of the elements $c_j$ have appeared earlier, else they would have been selected first. Hence all of those $k$ elements are equally likely to be $\pi_i$.

Inequality $(2)$ holds because all the $k$ elements are distinct integers from $[n]$. Therefore the sum will be the smallest when the $k$ elements are $1,2,...,k$.

Hence we have that if the $i$-th row of $A$ has at least $\frac{2n}{3}$ entries as $1$ then $\mathbb{E}[\pi_i]\geq \frac{n}{3}$, which proves our claim.
\end{proof}
Applying \cref{imdone} recursively completes the proof. When the first element is selected, in expectation there will be $\frac{n}{3}$ elements that cannot be selected - either because they are smaller than the element picked or because they already have $0$'s in their columns in $A$. Therefore we can \textit{discard} all those elements that can no longer be selected - meaning, if element $j$ can no longer be selected we may remove the $j$-th row and the $j$-th column from the matrix $A$. Now we will have a smaller matrix, $A_1[.][.]$, leaving apart the rows and columns of the discarded elements. On this matrix, $A_1$, we shall again apply the \cref{imdone} to discard $\frac{1}{3}$ fraction of its elements in expectation, when the next element is selected. This process continues and \cref{imdone} states that each time an element is selected, $\frac{1}{3}$rd of the remaining elements are discarded in expectation. Hence this process ends in $\mathscr{O}(\log n)$ many steps in expectation. In other words, the number of elements selected is $\mathscr{O}(\log n)$.

\end{proof}

\section{Facility Location with Congestion}\label{sec:congestion_full}

In this model, we consider a sequence of $n$ requests $x_1,...,x_n$ from a metric space $(\mathscr{M},d(\cdot,\cdot))$ endowed with a distance metric $d$. The goal is to open facilities when needed and assign these incoming requests to the facilities. We assume we are given a congestion function $g$, which is a convex non-decreasing function such that $g(0)=0$.  Opening a facility incurs a cost of $f$, while assigning a request $x_i$ to a facility $F$ containing $k$ requests before $x_i$ was assigned incurs a cost of $d(x_i,F)+g(k+1)-g(k)$. The additional $g(k+1)-g(k)$ is the congestion cost at the facility due to the new request.

\begin{definition}[Online Facility Location with Congestion]
A sequence of $n$ requests, $x_i$, is given as input, where $x_i\in \mathscr{M}$ for some metric space $\mathscr{M}$. Each such request needs to be allocated to a facility in an online fashion. Opening a facility incurs a cost of $f$ and allocating a request $x_i$ to a facility $F$ containing $k$ requests before $x_i$ was assigned incurs a cost of $d(x_i,F)+g(k+1)-g(k)$. The goal is to open facilities and allocate the requests in an online fashion such that the cost incurred is minimized.
\end{definition}

In this paper, we consider congestion costs which satisfy $g(a\cdot b) = g(a) \cdot g(b)$ (for example a monomial). For this particular model, we prove the following :

\begin{theorem}\label{thm1}
\begin{enumerate}
\item In the online facility location problem with congestion, there exists an online algorithm attaining a competitive ratio of $\mathcal{O}(\frac{\log k^*}{\log\log k^*})$ in expectation, where $k^* := 2\cdot g^{-1}\left(\frac{f}{g(2)-2}\right)$, a constant independent of the number of requests.\label{enu:cong_ub}
\item 
Furthermore, no randomized online algorithm can achieve a competitive ratio better than $\frac{\log k^*}{\log \log k^*}$.\label{enu:cong_lb} 
\end{enumerate}
\end{theorem}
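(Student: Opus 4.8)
The plan has two independent halves, mirroring the two bounds. For the upper bound \eqref{enu:cong_ub} I would first pin down the structure of $OPT$ with a capacity lemma: \emph{$OPT$ never assigns more than $k^*$ requests to a single facility.} The argument is a local exchange. Suppose $OPT$ placed $\ell > k^*$ requests on a facility $c^*$; replace $c^*$ by two co-located copies splitting the requests into halves of size $\approx \ell/2$. The distance costs are unchanged, the facility cost rises by $f$, and the congestion cost changes from $g(\ell)$ to $2g(\ell/2)$. Using the multiplicativity $g(\ell)=g(2)\,g(\ell/2)$, the net change is $f-(g(2)-2)\,g(\ell/2)$, which is negative precisely when $\ell/2 > g^{-1}\!\big(\tfrac{f}{g(2)-2}\big)$, i.e. when $\ell > k^*$ (this uses $g(2)>2$, which holds for genuinely superlinear congestion, e.g. a monomial of exponent $>1$). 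This contradicts optimality, so every $OPT$ facility carries load at most $k^*$, and in particular pays congestion at most $g(k^*)=\tfrac{g(2)}{g(2)-2}f=\mathcal{O}(f)$.

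Next I would run a congestion-aware variant of $RFL$ and analyse it cluster by cluster, exactly as in the weighted case: fix a facility $c^*$ opened by $OPT$, its request set $S_{c^*}$ (of size $\ell\le k^*$), and bound the algorithm's cost on $S_{c^*}$. Following the proof-technique outline, I would split this cost into two pieces. The first piece---facility-opening plus distance cost---is handled by the classical analysis, via statements analogous to \cref{prop1} and \cref{prop2}; since the cluster contains at most $k^*$ requests, the localized Fotakis argument gives a bound of $\mathcal{O}\!\big(\tfrac{\log k^*}{\log\log k^*}\big)$ times $OPT$'s facility-plus-distance cost on $S_{c^*}$, independent of $n$. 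The second piece is the pure congestion cost. Here the key point is that, by capping facility loads at $k^*$ (so the algorithm opens a fresh facility rather than overloading), every algorithm facility also pays congestion at most $g(k^*)=\mathcal{O}(f)$; hence the total congestion is at most $\mathcal{O}(f)$ times the number of facilities the algorithm opens, i.e. $\mathcal{O}(1)$ times its facility-opening cost, which is already subsumed in the first piece. Summing over all $OPT$ clusters then yields the ratio $\mathcal{O}\!\big(\tfrac{\log k^*}{\log\log k^*}\big)$.

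For the lower bound \eqref{enu:cong_lb} I would reuse Fotakis' hard instance but scale it to the capacity threshold. Concretely, build the binary-tree/star construction of \cref{sec:timed_lb} on $\Theta(k^*)$ requests, with the congestion function calibrated so that splitting a cluster is never profitable for $OPT$ (so $OPT$ keeps all $\Theta(k^*)$ requests on one facility and behaves like the offline optimum of the classical instance), while the marginal congestion terms are dominated by the facility and distance costs that drive the hardness. Running the $\Omega\!\big(\tfrac{\log m}{\log\log m}\big)$ argument with $m=\Theta(k^*)$ and invoking Yao's principle (as in the proof of \cref{secmodel}, via the guessing analysis of \cref{guess}) then forces any randomized online algorithm to pay $\Omega\!\big(\tfrac{\log k^*}{\log\log k^*}\big)$ times $OPT$.

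The main obstacle is the congestion bookkeeping in the upper bound. Unlike distance and facility costs, congestion is not separable across $OPT$ clusters: a single algorithm facility may serve requests from many clusters, and a congestion-aware opening rule alone does not keep its load near $k^*$ (for $g(x)=x^2$ the load at which the marginal congestion $g(k+1)-g(k)$ reaches $f$ is $\Theta(f)\gg k^*=\Theta(\sqrt f)$). Making the ``per-facility congestion is $\mathcal{O}(f)$'' step rigorous therefore requires an explicit load cap at $k^*$ together with a careful argument that the forced openings this cap induces do not break the \cref{prop1}-style accounting of the expected cost paid before a new facility is opened in a subsequence. Reconciling the capped algorithm's cluster analysis with the Fotakis bound is the delicate part; by contrast, the lower bound is essentially a recalibration of an existing construction.
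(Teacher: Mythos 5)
Your proposal is correct and follows essentially the same route as the paper: the capacity lemma via the two-facility exchange is exactly \cref{opt-bound}, the load-capped algorithm is \cref{MRFL}, the two-piece cost split with a per-cluster Fotakis analysis (robust to pre-existing and interleaved facilities) matches the paper's $C_1$/$C_2$ decomposition, and the lower bound is the same tree construction recalibrated to $\Theta(k^*)$ requests. The bookkeeping point you flag as delicate is resolved in the paper by charging the congestion cost and the forced openings ($C_2$) directly against $C_{OPT}\geq \frac{n}{k^*}f$ (a consequence of the capacity lemma, since the forced openings number at most $\frac{n}{k^*}$), rather than against the algorithm's own facility-opening cost.
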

In the rest of this section, we first prove the Lower Bound for the Worst Case input in \cref{lb}. Then we prove that the same bound is achieved, in expectation, by a randomized algorithm in \cref{ub}. However before we do these, we shall make a key observation that in the congestion model, an offline optimal algorithm will not allocate too many requests to a single facility.

Let $k'$ be the smallest integer such that $g(k'+1)-g(k')\geq f$. If an online algorithm allocates $k'+1$ requests to one facility then the congestion cost for the $k'+1$-th request itself is $g(k'+1)-g(k')\geq f$. So no online algorithm needs to put more than $k'$ requests in a facility as it might as well open a new facility and incur less cost.
\par\noindent
The above reasoning holds for offline algorithms too and specifically for the offline optimal algorithm, $OPT$. However we can give an even better bound when considering OPT. For this we shall assume that the congestion function satisfies $g(a\cdot b)=g(a)\cdot g(b)$.
\begin{claim}\label{opt-bound}
Any facility opened by OPT has at most $k^*:=2\cdot g^{-1}\left(\frac{f}{g(2)-2}\right)$ requests allotted to it.
\end{claim}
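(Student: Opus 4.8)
The plan is to argue by contradiction from the optimality of $OPT$: if some facility $c^*$ opened by $OPT$ served strictly more than $k^*$ requests, then $OPT$ could strictly lower its total cost by splitting the load of $c^*$ across two facilities, which is impossible. So I would fix a facility $c^*$ of $OPT$ with, say, $m$ requests assigned to it, and compare $OPT$ against the modified solution obtained by opening one extra facility $c'$ \emph{at the same location} as $c^*$ and reassigning (roughly) half of the requests of $c^*$ to $c'$. Because $c'$ and $c^*$ are colocated, $d(x_i,c')=d(x_i,c^*)$ for every request, so the total distance cost is unchanged; the only changes are an extra facility-opening cost of $f$ and a change in congestion. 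Since $\ell$ requests at one facility incur total congestion $g(\ell)$, splitting $m$ requests into groups of sizes $a$ and $m-a$ replaces the term $g(m)$ by $g(a)+g(m-a)$.

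Next I would quantify the congestion saving for the balanced split. Taking $a=m/2$ (the even case, which is the clean heart of the argument) and using multiplicativity $g(m)=g\!\left(2\cdot\tfrac{m}{2}\right)=g(2)\,g\!\left(\tfrac{m}{2}\right)$, the congestion saving is
\[
g(m)-2\,g\!\left(\tfrac{m}{2}\right)=\bigl(g(2)-2\bigr)\,g\!\left(\tfrac{m}{2}\right).
\]
Convexity of $g$ guarantees that the balanced split minimizes $g(a)+g(m-a)$, so this is the largest available saving and in particular a legitimate lower bound on the improvement. Optimality of $OPT$ forces the modified solution to be no cheaper, so the extra opening cost must dominate the saving:
\[
f\ \ge\ \bigl(g(2)-2\bigr)\,g\!\left(\tfrac{m}{2}\right).
\]
Since $g$ is non-decreasing (hence invertible on the relevant range) and $g(2)>2$, applying $g^{-1}$ gives $\tfrac{m}{2}\le g^{-1}\!\left(\frac{f}{g(2)-2}\right)$, i.e.\ $m\le k^*$, which is exactly the claim. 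Equivalently, if $m>k^*$ the saving strictly exceeds $f$, contradicting optimality.

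The two points I expect to need care are the following. First, the parity of $m$: for odd $m$ the split is into sizes $\lfloor m/2\rfloor$ and $\lceil m/2\rceil$, and one must verify (again via convexity and monotonicity of $g$) that the saving still exceeds $f$ once $m>k^*$; this affects only lower-order rounding and not the stated bound, so I would dispatch it after the clean even-case computation. Second, the hypothesis $g(2)>2$ is implicit in the very definition of $k^*$ (the denominator $g(2)-2$): I would note that $g(0)=0$ together with multiplicativity forces $g(1)=1$, convexity then gives $g(2)\ge 2$, and strict inequality holds precisely when $g$ is genuinely superlinear (as for a monomial $k^p$ with $p>1$), the degenerate linear case $g(2)=2$ being excluded. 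The conceptual core — trading one facility-opening cost against a congestion saving of $\bigl(g(2)-2\bigr)g(m/2)$ via a colocated split — is the single step I expect to carry the whole argument, and I do not anticipate any serious obstacle beyond the bookkeeping above.
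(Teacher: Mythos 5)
Your proposal is correct and follows essentially the same route as the paper's proof: split the $m>k^*$ requests of a single $OPT$ facility across two facilities, observe that the congestion saving for the balanced split is $(g(2)-2)\,g(m/2)$ by multiplicativity, and conclude $f\ge (g(2)-2)\,g(m/2)$ from optimality. Your colocation of the second facility (making the distance cost exactly unchanged) and your explicit handling of $g(1)=1$, $g(2)\ge 2$ are slightly cleaner than the paper's phrasing, but the argument is the same.
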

\begin{proof}
Let $F$ be a facility opened by OPT with $k$ requests in it. Since OPT did not split the $k$ requests into two facilities but kept them in a single facility, it must be the case that splitting them into two facilities costs more. Notice that the cost paid if two facilities are opened is $\geq 2f+g(a_1)+g(a_2)$ in addition to some distance cost, where $a_1+a_2=k$. Since $g$ is a convex non-decreasing function, we have $g(a_1)+g(a_2)\geq 2g\left(\frac{k}{2}\right)$ (Here we have assumed that $k$ is even, else we can look at $k-1$ instead of $k$). Also opening two facilities means that the total distance cost is lesser compared to opening only one facility. Hence we can conclude 
$$2f+2g\left(\frac{k}{2}\right)\geq f+g(k)\;.$$
Since we are working under the assumption that $g(a\cdot b)=g(a)\cdot g(b)$, we can now write this as 
$$f\geq (g(2)-2)g\left(\frac{k}{2}\right)$$
$$\implies \frac{k}{2}\leq g^{-1}\left(\frac{f}{g(2)-2}\right)$$
$$\implies k\leq 2.g^{-1}\left(\frac{f}{g(2)-2}\right)\;.$$
\end{proof}
\subsection{Lower Bound}\label{lb}
Here we shall present a proof of \cref{thm1}-\cref{enu:cong_lb}. The proof will be along the same lines as the previous lower bound proof.
\begin{proof}[Proof of {\cref{thm1}-\cref{enu:cong_lb}}]
Consider a binary tree of depth $h$. The distance between root node and its children is $\frac{f}{h}$ but the distance keeps decreasing as we traverse down the tree. The distance between any level $i$ node and its child will be $\frac{f}{h^{i+1}}$, where we will decide later what $h$ to choose. The input will comprise of $1$ request at root node, followed by $h$ requests at a node at level $1$, and so on, with $h^i$ requests for some node at the $i$th level. The requested nodes will form a path from the root node to a leaf node, say $\ell$. Also there will be a total of $n=1+h+h^2+...+h^{h-1}=\frac{h^h-1}{h-1}$ requests.

\begin{figure}[ht]
\begin{framed}[0.9\textwidth]
\label{tree}
\begin{adjustbox}{width=\textwidth}
\begin{tikzpicture}[line cap=round,line join=round,>=triangle 45,x=1cm,y=1cm]
\clip(-7.786097219966367,-2.0857989746113774) rectangle (8.447811873179946,5.659128488577184);
\draw [line width=2pt] (0,5)-- (-2,2);
\draw [line width=2pt] (0,5)-- (-2,2);
\draw [line width=2pt] (0,5)-- (2,2);
\draw [line width=2pt] (-2,2)-- (-3,0);
\draw [line width=2pt] (-2,2)-- (-1,0);
\draw [line width=2pt] (2,2)-- (1,0);
\draw [line width=2pt] (2,2)-- (3,0);
\draw [line width=2pt] (-3,0)-- (-3.54,-0.97);
\draw [line width=2pt] (-3,0)-- (-2.32,-0.99);
\draw [line width=2pt] (-1,0)-- (-1.58,-0.99);
\draw [line width=2pt] (-1,0)-- (-0.36,-0.99);
\draw [line width=2pt] (1,0)-- (0.42,-0.99);
\draw [line width=2pt] (1,0)-- (1.62,-0.99);
\draw [line width=2pt] (3,0)-- (2.4,-0.97);
\draw [line width=2pt] (3,0)-- (3.64,-0.99);
\begin{scriptsize}
\draw [fill=dtsfsf] (0,5) circle (2.5pt);
\draw [fill=dtsfsf] (-2,2) circle (2.5pt);
\draw [fill=black] (2,2) circle (2.5pt);
\draw [fill=black] (-3,0) circle (2.5pt);
\draw [fill=dtsfsf] (-1,0) circle (2.5pt);
\draw [fill=black] (1,0) circle (2.5pt);
\draw [fill=black] (3,0) circle (2.5pt);
\draw [fill=black] (-3.54,-0.97) circle (2.5pt);
\draw [fill=black] (-2.32,-0.99) circle (2.5pt);
\draw [fill=dtsfsf] (-1.58,-0.99) circle (2.5pt);
\draw [fill=black] (-0.36,-0.99) circle (2.5pt);
\draw [fill=black] (0.42,-0.99) circle (2.5pt);
\draw [fill=black] (1.62,-0.99) circle (2.5pt);
\draw [fill=black] (2.4,-0.97) circle (2.5pt);
\draw [fill=black] (3.64,-0.99) circle (2.5pt);
\draw (-1.8,4) node[anchor=north west] {\LARGE $\frac{f}{h}$};
\draw (-1.4,1.5) node[anchor=north west] {\LARGE $\frac{f}{h^2}$};
\draw (-2,0) node[anchor=north west] {\Large $\frac{f}{h^3}$};
\end{scriptsize}
\end{tikzpicture}
\end{adjustbox}
\caption{The sequence of input requests will consist of $1$ request at the root node, followed by $h$ requests at the next node, $h^2$ requests at the next node and so on. These nodes where requests will be, are shown in red and form a path from the root node to a leaf node.}

\end{framed}
\end{figure}

Notice that since any offline algorithm knows this entire sequence of requests, it can open one facility at the appropriate leaf node $\ell$. This would result in a facility opening cost of $f$ and a congestion cost of $g(n)$ and a distance cost which we will compute next. The root node will be at a distance of $\frac{f}{h}+\frac{f}{h^2}+\frac{f}{h^3}+...+\frac{f}{h^{h-1}}\leq 2\frac{f}{h}$ from the leaf node and as stated earlier, it will have one request so it will have to pay $\leq 2 \frac{f}{h}$ distance cost once. The next node at level $1$ will have to pay the distance cost of $\frac{f}{h^2}+\frac{f}{h^3}+...+\frac{f}{h^{h-1}}\leq 2\frac{f}{h^2}$ but it will have $h$ requests so it will have to pay this cost $h$ times. The $i$-th level node will have to pay a distance cost of $\leq 2\frac{f}{h^{i+1}}$ but it will have to pay it $h^i$ times. Since $OPT$ does not allocate more than $k^*$ requests to a facility, we shall select our input such that the total number of requests are $k^*$ or less, in order to achieve the best lower bound.

Therefore total distance cost $\leq 2\cdot \frac{f}{h}\cdot (h-1)\leq 2\cdot f$ and hence $C_{OPT}\leq f+g(k^*)+2\cdot f\;.$

Now any online algorithm on receiving a request at a level for the first time, can decide to do one of the following : 
\begin{itemize}
    \item Open a new facility at the request in the level
    \item Use another previously opened facility and pay a distance cost
    \item Try to guess the path following which the future requests would arrive and open a facility at a later level
\end{itemize}
Since we are assuming an adaptive adversarial model where the adversary can decide the next request depending on the previous actions of the algorithm, if any online algorithm tries to guess the path and opens a later level facility the adversary will simply choose another path so that the path from this newly opened facility to future requests will be via the parent node, $x$. So for the current level requests the algorithm will have to pay a distance cost at least to its parent node and future level requests will also have to pass through the parent node, $x$, since there is no open facility in the subtree of $x$ other than $x$ by construction. Therefore, the distance cost to future level requests will not be reduced either. Hence trying to guess future nodes will certainly not provide a better cost for online algorithms.

Now if the algorithm decides to open a facility at each level, its cost would be $f.(h+1)$ and there would be no distance cost. However even if the algorithm tries to not open a facility at a node level and use one of the earlier facilities, it will have to pay a distance cost. If the algorithm does not open a facility at the $i$-th level node, the algorithm will at least have to pay a distance cost of $\frac{f}{h^i}$ since this is the distance to its parent node and whatever open facility the algorithm will try to use to allocate this request will have to be through the parent node. Also, since there will be $h^i$ requests at level $i$, the total distance cost incurred in this level would be $\geq h^i.\frac{f}{h^i}= f$.

Now competitive ratio of any online algorithm $A$ would be $\mu_A=\frac{C_A}{C_{OPT}}$ where $C_A$ is the cost $A$. As we have seen earlier, $\frac{C_A}{C_{OPT}}\geq \frac{f.(h+1)}{f+g(k^*)+2f}\;.$

$$\therefore{\mu_A \geq \frac{f.(h+1)}{3f+g(k^*)}}=\frac{f.(h+1)}{3f+f\frac{g(2)}{g(2)-2}}=\frac{h+1}{3+\frac{g(2)}{g(2)-2}}=\Omega(h)\;.$$
However $h$ cannot be arbitrarily large since the total number of requests equals $1+h+h^2+...+h^h$ can be at most $k^*$, by construction.

Now $1+h+h^2+...+h^h=h^h+(1+h+h^2+...+h^{h-1})=h^h+\frac{h^h-1}{h-1}=\Omega(h^h)\leq k^*\;.$

Equivalently, we can say that $h^h\leq \mathcal{O}(k^*)$, allowing us to state that :
\begin{remark}\label{hpowerh}
The competitive ratio of any algorithm solving the online facility location problem with congestion is $\mu_A\geq \Omega(h)$, where $h$ is such that $h^h\leq \mathcal{O}(k^*)$.
\end{remark}
Notice that the larger that $h$ is, the better is the lower bound that we obtain and the only constraint preventing us from selecting an arbitrarily large $h$ is the fact that $h^h\leq \mathcal{O}(k^*)$. Hence in order to attain the best bound possible, we must select $h$ such that $h^h=\Theta(k^*)$, or $h=\Theta(\frac{\log k^*}{\log \log k^*})$ using our notation. This gives us a lower bound of $\frac{\log k^*}{\log \log k^*}$ for the competitive ratio of any algorithm solving the online facility location problem with congestion.
\end{proof}

Notice that we have proved this for deterministic algorithms only. One may use \cref{guess} to extend this proof for randomized algorithms just as we had done earlier.
\subsection{Upper Bound}\label{ub}
We have already seen a lower bound of the online facility location problem with congestion. Now we would like to find an online algorithm to solve the online facility location problem with congestion and study the upper bound. We shall present an algorithm that asymptotically attains the same competitive ratio as suggested by the lower bound, making our analysis tight.

We shall present this algorithm as a modified version of $RFL$. Then we shall study the modified algorithm and try to compare the competitive ratio of the modified algorithm with congestion to the competitive ratio of the original algorithm without congestion.

However notice that the lower bounds for the competitive ratio is different for the case with congestion as compared to the original case without congestion. As we have seen earlier, the lower bound for the competitive ratio is $\frac{\log k^*}{\log \log k^*}$, when we are looking at online facility location with congestion. However in the original setting without congestion, the lower bound was $\frac{\log n}{\log \log n}$, which as we have seen is attained by $RFL$ in expectation. Intuitively, the reason why we will be able to go from $n$ to $k^*$ is because in the congestion model we have the guarantee that $OPT$ will not have more than $k^*$ requests allocated to any facility so if we take one facility opened by $OPT$ and consider the requests allotted to it, we can expect to somehow get a competitive ratio of $\frac{\log k^*}{\log \log k^*}$. After that we will just have to show that the competitive ratio holds true even when there are multiple facilities opened by $OPT$.

We shall use the algorithm $RFL$ as a black box in order to obtain this modified algorithm, which we shall name $MRFL$.

\begin{algorithm}
  \caption{Modified RFL (MRFL)}\label{MRFL}
  \begin{algorithmic}[1]
    \Procedure{$MRFL$}{$x_i$}\Comment{Input request is $x_i$}
      \State On input request $x_i$, run algorithm $RFL$ on the current state for the input $x_i$ and perform the same action.
      \If{Any facility, $F$, has $k^*$ requests allocated to it}
        \State ``Close" facility $F$
        \State Open a new facility at the same location as $F$
    \EndIf
    \EndProcedure
  \end{algorithmic}
\end{algorithm}
We shall now prove \cref{thm1}-\cref{enu:cong_ub}, in particular for the Algorithm $MRFL$.
\begin{proof}[Proof of {\cref{thm1}-\cref{enu:cong_ub}}]
Firstly, by closing a facility, we mean that no more requests will be allocated to the facility anymore by the algorithm. No change in the model is assumed since the facility still exists: just that the algorithm will not allocate any more requests to it. Since another facility is opened right after a facility closes, at the same exact location, for future queries the algorithm will have this new facility available. Therefore the actions taken by $RFL$ in the model without congestion (For e.g., allocating a request to a facility which has now become ``closed'') can always be performed by $MRFL$ in the model with congestion (Since the ``closed'' facility has another facility open at the exact same location). Also apart from the extra facilities opened by $MRFL$ in step $5$, the actions taken by $RFL$ and $MRFL$ are identical. This means that the distance cost paid by the algorithm $RFL$ on a model without congestion is the same as the distance cost paid by the algorithm $MRFL$ on a model with congestion whenever the sequence of input requests is the same. We cannot say the same for the facility opening cost though, since the algorithm $MRFL$ opens excess facilities in step $5$. However if we separate the facilities opened by $MRFL$ into two categories : facilities opened by $MRFL$ in step $2$ and facilities opened by $MRFL$ in step $5$, we will notice that the number of facilities opened by $MRFL$ in step $2$ is equal to the number of facilities that $RFL$ would have opened in the congestion free model for the same sequence of input requests. Visually, the total number of points in the metric space where a facility is opened is same for $MRFL$ and $RFL$, just that $MRFL$ might have multiple facilities opened at a few points in the metric space. Also the congestion cost is something that only $MRFL$ has to pay and $RFL$ does not need to pay since $RFL$ runs on a congestion free model.

Based on this observation, we shall now try to divide the total cost paid by $MRFL$ into two parts. Let $C$ be the total cost paid by $MRFL$ over the entire sequence of input requests. Let $C_1$ be the total distance cost paid by $MRFL$ plus the total facility opening cost paid by $MRFL$ for facilities opened in step $2$ of the algorithm. Let $C_2$ be the total congestion cost paid by $MRFL$ plus the total facility opening cost paid over all facilities opened in step $5$ of the algorithm.

Clearly $C=C_1+C_2$ and hence we can write the competitive ratio of algorithm $MRFL$ as
$$\mu=\frac{C}{C_{OPT}}=\frac{C_1+C_2}{C_{OPT}}\;.$$
where $C_{OPT}$ is the cost paid by an optimal offline algorithm on the same sequence of input requests.

\begin{claim}\label{c2}
If the total number of input requests is $n$ then the cost $C_2$ incurred by algorithm $MRFL$ is at most $\frac{n}{k^*}f\left(1+\frac{g(2)}{g(2)-2}\right)\;.$
\end{claim}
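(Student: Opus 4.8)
The claim bounds $C_2$, which consists of:
1. Total congestion cost paid by $MRFL$
2. Total facility opening cost for facilities opened in step 5

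I need to show $C_2 \leq \frac{n}{k^*}f\left(1+\frac{g(2)}{g(2)-2}\right)$.

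Let me think about the structure.The plan is to bound the two components of $C_2$ separately and then combine them. Recall that $C_2$ consists of (i) the total congestion cost paid by $MRFL$ and (ii) the facility-opening cost $f$ paid for every facility opened in step $5$, i.e., every time some facility accumulates $k^*$ requests and is ``closed'' and reopened. The key structural observation driving both bounds is that by construction $MRFL$ never allows any single facility to hold more than $k^*$ requests: the $k^*$-th request triggers a close-and-reopen, so the new facility starts fresh. Thus the $n$ requests are partitioned into groups, each assigned to a facility holding at most $k^*$ requests.

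First I would bound the number of step-$5$ openings. Since each such opening happens exactly when a facility fills up to $k^*$ requests, and every step-$5$ opening corresponds to $k^*$ requests having been deposited, the number of step-$5$ openings is at most $\frac{n}{k^*}$. Each contributes a facility-opening cost of $f$, so this component is at most $\frac{n}{k^*}\cdot f$. This accounts for the ``$1$'' inside the parenthetical factor.

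Next I would bound the total congestion cost. Because each facility holds at most $k^*$ requests, the congestion cost accrued at any one facility is at most $g(k^*)$ (the congestion function evaluated at its maximum load). Using the multiplicativity assumption $g(a\cdot b)=g(a)\cdot g(b)$ together with the definition $k^* = 2\cdot g^{-1}\!\left(\frac{f}{g(2)-2}\right)$, one computes $g(k^*) = g(2)\cdot g\!\left(g^{-1}\!\left(\frac{f}{g(2)-2}\right)\right) = g(2)\cdot\frac{f}{g(2)-2} = f\cdot\frac{g(2)}{g(2)-2}$, exactly as in the lower-bound calculation earlier in the paper. Since there are at most $\frac{n}{k^*}$ distinct facilities receiving requests, the total congestion cost is at most $\frac{n}{k^*}\cdot g(k^*) = \frac{n}{k^*}\cdot f\cdot\frac{g(2)}{g(2)-2}$, which supplies the $\frac{g(2)}{g(2)-2}$ term. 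Adding the two components yields $C_2 \leq \frac{n}{k^*}\, f\left(1+\frac{g(2)}{g(2)-2}\right)$, as claimed.

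The main obstacle, and the step warranting the most care, is the counting of facilities and the accurate apportionment of congestion cost across them. One must argue that the total congestion cost is genuinely a sum over facilities of their individual congestion contributions, each bounded by $g(\text{load})\le g(k^*)$, and that the number of facilities bearing any load is at most $\lceil n/k^* \rceil$. A subtlety is whether a step-$2$-opened facility that never reaches $k^*$ requests still contributes congestion cost bounded by $g(k^*)$ — it does, since $g$ is non-decreasing and its load is below $k^*$ — so the per-facility bound $g(k^*)$ holds uniformly, and the facility count bound $n/k^*$ can be slightly generous without affecting the asymptotics. Once the bookkeeping is set up cleanly, the algebraic simplification of $g(k^*)$ via the multiplicativity identity is routine and mirrors the computation already performed in the proof of \cref{thm1}-\cref{enu:cong_lb}.
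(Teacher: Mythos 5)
Your decomposition of $C_2$ into (i) step-$5$ facility-opening costs and (ii) total congestion cost is exactly the paper's, and your bound of $\frac{n}{k^*}f$ on component (i) — each step-$5$ opening is triggered by a facility that has absorbed $k^*$ distinct requests, so there are at most $n/k^*$ such openings — matches the paper's argument. The algebraic identity $g(k^*)=g(2)\cdot g\bigl(g^{-1}\bigl(\tfrac{f}{g(2)-2}\bigr)\bigr)=f\cdot\tfrac{g(2)}{g(2)-2}$ via multiplicativity is also correct.

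However, your justification of the congestion bound contains a genuine error. You assert that ``there are at most $\frac{n}{k^*}$ distinct facilities receiving requests'' (and later that ``the number of facilities bearing any load is at most $\lceil n/k^*\rceil$''), and you multiply this count by the per-facility bound $g(k^*)$. That count is false: $MRFL$ caps each facility at \emph{at most} $k^*$ requests, so nothing prevents it from opening up to $n$ facilities with one request each; the quantity $n/k^*$ is a lower bound on the number of loaded facilities, not an upper bound. Your worry in the final paragraph that $n/k^*$ ``can be slightly generous'' points in the wrong direction — the count can be much larger, not smaller. The correct argument, which is what the paper (somewhat informally) invokes, is an extremal/convexity one: if the loads are $a_1,\dots,a_k$ with $\sum_i a_i=n$ and each $a_i\le k^*$, then since $g$ is convex with $g(0)=0$ the map $x\mapsto g(x)/x$ is non-decreasing, so
\begin{equation*}
\sum_{i=1}^{k} g(a_i)\;=\;\sum_{i=1}^{k} a_i\cdot\frac{g(a_i)}{a_i}\;\le\;\frac{g(k^*)}{k^*}\sum_{i=1}^{k} a_i\;=\;\frac{n}{k^*}\,g(k^*)\;,
\end{equation*}
regardless of how many facilities there are. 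With this repair your proof coincides with the paper's.
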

\begin{proof}
In order to prove this, let use see which costs are counted in $C_2$. The entire congestion cost is counted in $C_2$ and also the facility opening cost for the facilities opened in step $5$ of the algorithm are counted in $C_2$. The two different costs in $C_2$ are actual costs of $MRFL$ for a particular run of the algorithm. Let us deal with these two costs individually.

In order to bound the facility opening cost in $C_2$, we need one key observation : every time a new facility is opened by $MRFL$ in step $5$, the algorithm has closed a facility. In order for a facility to be closed, there must have been $k^*$ requests allotted to it. Therefore every time a facility is opened in step $5$, it is because of $k^*$ distinct requests that the previous facility (which has now been closed) was allocated to. Hence, at most $\frac{n}{k^*}$ facilities can be opened in step $5$. Therefore the facility opening cost contributing towards $C_2$ is at most $\frac{n}{k^*}f$.

Now we would like to bound the total congestion cost paid by $MRFL$. In order to do this, we shall use the fact that the congestion function, $g$ is a convex non-decreasing function. Let the algorithm $MRFL$ open $k$ facilities and let there be $a_i$ requests allotted to the $i$-th facility for $i=1,2,...,k$. Then the total congestion cost paid by $MRFL$ is $\sum\limits_{i=1}^k g(a_i)$.

We have $\sum\limits_{i=1}^k a_i=n$, which is fixed and we want to see the maximum value attained by $\sum\limits_{i=1}^k g(a_i)$. This maximum is attained when $k$ is least and the $a_i$'s are the largest- that is, when there are the least number of facilities and the facilities have a large number of requests allotted to them. So we could have had a congestion cost as large as $g(n)$. However the way that this algorithm $MRFL$ is designed, we have a guarantee that no facility has more than $k^*$ requests allotted to it. Hence the maximum congestion cost by $MRFL$ is at most $\frac{n}{k^*}g(k^*)=\frac{n}{k^*}\frac{f}{g(2)-2}g(2)\;.$

Adding up both these types of costs we get that 
$$C_2\leq \frac{n}{k^*}f+\frac{n}{k^*}\frac{f}{g(2)-2}g(2)=\frac{n}{k^*}f\left(1+\frac{g(2)}{g(2)-2}\right)\;.$$
\end{proof}
At this point, using \cref{opt-bound}, we can say the OPT must open at least $\frac{n}{k^*}$ facilities and hence $C_{OPT}\geq \frac{n}{k^*}f$. Also using \cref{c2}, we  know that $C_2\leq \frac{n}{k^*}f\left(1+\frac{g(2)}{g(2)-2}\right)$. Therefore,
\begin{equation}\label{C2-eq}
    \frac{C_2}{C_{OPT}}\leq 1+\frac{g(2)}{g(2)-2}\;.
\end{equation}
Hence we can focus our attention on $\frac{C1}{C_{OPT}}$, since $\frac{C_2}{C_{OPT}}$ is a constant. Let us first try to understand what these terms represent. $C_1$ represents the total distance cost paid by the algorithm $MRFL$ plus the facility opening cost for facilities opened in step $2$ of the \cref{MRFL}. We notice that the distance cost paid by $MRFL$ over a sequence of input requests is equal to the distance cost paid by $RFL$ over the same sequence of input requests in the model without congestion. Also, the number of facilities opened by $RFL$ is the same as the number of facilities opened by $MRFL$ in step $2$ by construction. Hence $C_1$, in expectation, is the cost paid by $RFL$ on the same sequence of input requests in the model without congestion.

However the denominator $C_{OPT}$ is the cost paid by $OPT$ for the same sequence of input requests in the model with congestion. One thing to note here is that if the denominator was the cost paid by $OPT$ without congestion, we would have been stuck because then this ratio would have been $\mu(n):=\frac{\log n}{\log \log n}$, which is much larger than the competitive ratio of $\frac{\log k^*}{\log \log k^*}$ we are trying to achieve.

At this point we would have ideally liked to say that since $OPT$ allocates at most $k^*$ requests to any facility, let us take any particular facility $c^*$ opened by $OPT$. If we looked only at the requests allotted to $c^*$ by $OPT$, the cost paid by algorithm $RFL$ is at most $\mu(k^*)$ times the cost paid by $OPT$ had the input been only these requests allotted to $c^*$ by $OPT$. However the input actually comprises of several such clusters, where each cluster is the set of requests that were allocated to the same facility by $OPT$. While it is true that the competitive ratio would have been $\mu(k^*)$ if the input was only one such cluster, it is not clear if that will be the case for the entire input sequence. This is because while $OPT$'s cost over the entire input sequence is exactly the sum of costs of the clusters, $RFL$ might end up paying more on the entire input when compared to $RFL$'s costs on the clusters individually (had the input been just the cluster), summed up.

More formally, let us view the entire input sequence of length $n$ as clusters of input sequences where each such input sequence corresponds to the requests allotted to a given facility. That is $n=\sum_j a_j$, where $a_j$ is the number of requests allotted to the $j$-th facility. Also, we know that $a_j\leq k^*$ for all $j$. Additionally had the input been just the $a_j$ requests, $\frac{C_1}{C_{OPT}}$ would have been less than $\mu(k^*)$. Now that the various clusters are given as input together, the cost paid by $OPT$ is exactly the sum of the costs of the clusters separately. However for the online algorithm $RFL$ the cost paid on the entire input might be more than the sum of the costs over the clusters (assuming that the input was just the cluster). Our initial guess was that this cost $C_1$ over the entire sequence might still be less than some constant times the sum of the costs of the clusters individually. However we have been unable to prove this.

We observe that the denominator has $OPT$'s cost in the model with congestion. So $OPT$ would have at most $k^*$ requests allotted to any facility (Using \cref{opt-bound}). Let us now concentrate on any facility opened by $OPT$ and look at the requests that are allocated to the facility. So let us write $S=S_1\cup S_2\cup ...\cup S_k$, where $S$ is the entire collection of input requests and $S_i$ is the set of input requests that were allotted to a particular facility and $k$ is the total number of facilities opened by $OPT$. Let $C_{S_i}$ represent the cost paid by $OPT$ in the model with congestion if the input was just $S_i$ instead of $S$ and let $\Tilde{C}_{S_i}$ represent the cost paid by $OPT$ if the input was $S_i$ in the model without congestion. Similarly let $C_{1_{S_i}}$ be the cost paid by $RFL$ (recall that for this we are only looking at the model without congestion) over the requests in $S_i$ but when the input is the entire input $S$.
\begin{align*}
    \frac{C_1}{C_{OPT}}&=\frac{C_1}{\sum_i C_{S_i}}&&\\
    &\leq\frac{C_1}{\sum_i \Tilde{C}_{S_i}}&&\text{Since $C_{S_i}\geq \Tilde{C}_{S_i}$ by definition}\\
    &=\frac{\sum_i C_{1_{S_i}}}{\sum_i \Tilde{C}_{S_i}}&&\text{By definition of $C_{1_{S_i}}$}\\
    &\leq \max_i\left\{\frac{C_{1_{S_i}}}{\Tilde{C}_{S_i}}\right\}&&
\end{align*}
Here the denominator, $\Tilde{C}_{S_i}$ is the cost paid by $OPT$ if the input was $S_i$ in the model without congestion. However the numerator is $C_{1_{S_i}}$, which is the cost paid by $RFL$ in the model without congestion when the input is the entire input $S$. Had the numerator been the cost paid by $RFL$ in the model without congestion when the input is just $S_i$, we would have stated that this is exactly $\mu(|S_i|)$. Also since $|S_i|\leq k^*$ for all $i$, we would have obtained $\frac{C_1}{C_{OPT}}\leq \mu(k^*)=\mathcal{O}\left(\frac{\log k^*}{\log \log k^*}\right)$ for $RFL$.

Now the only problem that we have here is the fact that $C_{1_{S_i}}$ is the cost paid by $RFL$ in the model without congestion when the input is the entire input $S$. This basically means that now $RFL$ might have extra facilities already open when the requests of $S_i$ start coming in. Also new facilities might open between any two requests of $S_i$ since other requests from $S$ might be interleaved in between requests from $S_i$. The analysis of $RFL$ in \cite{Fotakis2008} actually goes through\footnote{Also any algorithm that satisfies this property, could have been used as a black box instead of $RFL$} if there were open facilities available before the first request comes and also if new facilities appeared in between two requests (because of facilities opening due to interleaved requests from other clusters). For the sake of completion, we add a thorough analysis of this in the Appendix in \cref{RFLC}. This completes the proof that the competitive ratio of $MRFL$ is $\frac{\log k^*}{\log \log k^*}$.
\end{proof}

\newpage
\bibliographystyle{plainurl}
\bibliography{refs}

\appendix

\section{Facility Location with bounded-away requests}\label{sec:bounded}
The sequence of input requests that gave us the lower bound in \cref{lbwc} are a modification of Fotakis's analysis in \cite{Fotakis2008}. These requests were such that the distance between consecutive pairs of requests kept getting smaller and smaller. In some real life applications of the online facility location problem, the requests can be such that no two requests are arbitrarily close to one another and one may assume that the distance between any two requests is bounded away. Hence, one might ask whether such a lower bound is possible when the distance between pairs of requests is not arbitrarily small.

In this section we shall prove that if the distance between pairs of requests is bounded away with respect to $f$, then such a lower bound is not possible. More precisely, we show that if the distance between any two requests is at least $d$ and $\nicefrac{f}{d}$ is not arbitrarily large then $RFL$ attains a competitive ratio of $\mathcal{O}\left(\frac{\log\nicefrac{f}{d}}{\log\log\nicefrac{f}{d}}\right)$ in expectation. Notice that if $\nicefrac{f}{d}$ is a constant independent of $n$ (the number of requests), this tells us that Meyerson's algorithm achieves a \emph{constant} competitive ratio.  This follows directly from the work of Meyerson, we make this formal by analysing this explicitly.

\begin{algorithm}
  \caption{Randomized Facility Location (RFL)}\label{RFL}
  \begin{algorithmic}[1]
    \Procedure{$RFL$}{$x_i$}\Comment{Input request is $x_i$}
      \State $F_i\gets \texttt{Nearest facility to }x_i$
      \State $p_i\gets \min\{1,\frac{d(x_i,F_i)}{f}\}$
      \State With probability $p_i$ open a new facility at $x_i$ and allocate $x_i$ to it
      \State With probability $1-p_i$, assign $x_i$ to $F_i$
    \EndProcedure
  \end{algorithmic}
\end{algorithm}

\begin{theorem}
If any two requests are at least $d$ distance apart, then $RFL$ attains a competitive ratio of $\mathcal{O}(\frac{\log f/d}{\log \log f/d})$.
\end{theorem}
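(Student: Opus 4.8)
The plan is to run the per-cluster analysis of $RFL$ (as in \cref{lemma1} and the black-box analysis in \cref{RFLC}) and observe that the bounded-away hypothesis collapses the range of \emph{relevant} distance scales from the worst-case $n$ down to $\mathcal{O}(f/d)$. First I would fix a facility $c^*$ opened by $OPT$ and let $S$ be the set of requests $OPT$ assigns to it. As in the main proof it suffices to bound the expected cost of $RFL$ on each such cluster $S$ against $OPT$'s cost $f+\sum_{x\in S}d(x,c^*)$ and then sum over clusters, invoking \cref{RFLC} to justify that interleaved requests from other clusters and pre-existing facilities do not spoil the per-cluster bound. Through the same phase decomposition used in \cref{lemma1} (specialised to unit weights), the competitive ratio on $S$ is $\mathcal{O}(k)$, where $k$ is the number of phases, i.e. the number of times $RFL$ opens a facility on a request of $S$ strictly closer to $c^*$ than all previously opened ones, with the Type~1/Type~2 split and \cref{prop2} controlling the far requests. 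So the statement reduces to showing $\mathbb{E}[k]=\mathcal{O}\!\left(\frac{\log(f/d)}{\log\log(f/d)}\right)$.

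The key new ingredient is the following consequence of the separation hypothesis: since any two requests are at distance at least $d$, at most one request can lie in the open ball of radius $d/2$ about $c^*$ (two such would be within distance $d$ of each other), so every other $x\in S$ satisfies $d(x,c^*)\ge d/2$. Hence the facility-to-$c^*$ distances $r_1>r_2>\dots>r_k$ produced across phases all satisfy $r_i=\Omega(d)$, apart from the single exceptional phase that may be opened on the one request within $d/2$ of $c^*$, which contributes only $+1$ to $k$. At the other end, phases with $r_i\ge f$ are charged directly to $OPT$: the request opening such a facility is at distance greater than $f$ from $c^*$, so its opening cost $f$ is dominated by the $d(x,c^*)$ that $OPT$ already pays for it. Thus the phases that can be numerous are exactly those with $r_i\in[d/2,f)$, a geometric window spanning a factor of only $\mathcal{O}(f/d)$ rather than the $\mathrm{poly}(n)$ window available in the unrestricted setting.

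Finally I would invoke Fotakis's refined bound on the number of facilities $RFL$ opens inside a cluster \cite{Fotakis2008}. The point is that this bound is driven not by the raw cardinality of $S$ but by the span of active distance scales: in the classical worst case the adversary spreads requests over scales $f,f/n,f/n^2,\dots$, and it is this $\Theta(\log n/\log\log n)$-deep hierarchy that forces the $\frac{\log n}{\log\log n}$ ratio. Clamping the active scales to the window $[d/2,f)$ caps the depth of that hierarchy at $\mathcal{O}(\log(f/d))$, and re-running Fotakis's counting argument with $f/d$ playing the role of $n$ yields $\mathbb{E}[k]=\mathcal{O}\!\left(\frac{\log(f/d)}{\log\log(f/d)}\right)$, which by the previous paragraph gives the theorem. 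I expect the main obstacle to be precisely this last substitution: making rigorous that Fotakis's $\log/\log\log$ phase count depends only on the multiplicative span of the relevant distances and not on $n$ directly, so that replacing that span by $f/d$ is legitimate. By comparison, the handling of the $r_i\ge f$ phases and of the single sub-$d/2$ request is routine bookkeeping.
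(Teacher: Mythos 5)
You have correctly isolated where the separation hypothesis enters --- at most one request can lie within distance $d/2$ of $c^*$ (the paper uses $d/4$), so the distances to $c^*$ that matter span only a multiplicative window of $\mathcal{O}(f/d)$ --- but the machinery you build on top of this observation does not close. Your reduction sends everything to the claim $\mathbb{E}[k]=\mathcal{O}\bigl(\log(f/d)/\log\log(f/d)\bigr)$, where $k$ is the number of phases in the sense of \cref{lemma1}, i.e.\ the number of times $RFL$ opens a facility strictly closer to $c^*$ than all previously opened ones. That claim is false in the adversarial model for which the theorem is stated. A phase ends whenever the new facility is \emph{strictly} closer to $c^*$; nothing forces consecutive phase radii to shrink geometrically, so the $r_i$ can form an arbitrarily long strictly decreasing sequence inside the single window $[d/2,f)$. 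Concretely, take $n$ requests mutually at distance $\ge f/2$ (hence $\ge d$), all at distances between $f/2$ and $f$ from $c^*$, presented in strictly decreasing order of distance to $c^*$: each arriving request is at distance $\ge f/2$ from every open facility, so it opens one with probability at least $1/2$, and every such opening starts a new phase, giving $\mathbb{E}[k]=\Omega(n)$ while $\log(f/d)/\log\log(f/d)$ may be a small constant. (In this instance the final ratio is still fine, but only because $OPT$'s distance cost is itself $\Omega(nf)$ --- the bound $2kf+4C$ is rescued by the denominator, not by $k$ being small; your reduction to bounding $\mathbb{E}[k]$ alone discards that information.) Relatedly, the ``Fotakis counting argument with $f/d$ in place of $n$'' that you hope to invoke is not a bound on the number of phases or facilities at all: the $\log/\log\log$ in \cref{RFLC} comes from a \emph{spatial} decomposition into $h$ annuli whose radii grow by a factor of $m$, with $2f$ charged per annulus via \cref{prop1} and a factor of $4m$ lost per request via \cref{prop2}, optimized subject to $m^{h}\ge(\text{ratio of outer to inner radius})$. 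The obstacle you flag at the end of your proposal is therefore not a technicality to be checked but the entire missing proof, and in the form you state it (a bound on the phase count) it cannot be supplied.

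The paper's argument uses exactly that annulus decomposition rather than phases: it places a ball $B(c^*,d/4)$, which by separation contains at most one request (costing $RFL$ at most $f$), covers the rest of the cluster by annuli $B(c^*,m^i\,d/4)\setminus B(c^*,m^{i-1}d/4)$ out to radius $f$, charges $2f$ per annulus plus $4m$ times $OPT$'s distance per request, and chooses $m=h=\Theta\bigl(\log(f/d)/\log\log(f/d)\bigr)$ so that $m^h\ge 4f/d$. If you want to salvage your route, you would have to redefine a phase so that consecutive phase radii decrease by a fixed factor $m$ --- at which point you have rediscovered the annuli.
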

\begin{proof}
The proof will be the exact same proof done by Fotakis, just using the additional information that vertices cannot be arbitrarily close. So, once again let $c^*$ be a facility opened by $OPT$. For those requests allocated to $c^*$, $OPT$ pays $f$ cost for opening facility at $c^*$ and $\sum_{i=0}^n d(x_i,c^*)$ total distance cost.

Now let us draw a ball at $c^*$ of radius $\nicefrac{d}{4}$ (We shall refer to this ball as $B(c^*,\nicefrac{d}{4})$). Since any two requests are at least $d$ distance apart, this ball can at most contain $1$ requests. This is the additional property that we did not have earlier. Also there cannot be any request more than $f$ distance away from $c^*$ that $OPT$ has assigned to $c^*$. So all the remaining requests are inside $B(c^*,f)\setminus B(c^*,\frac{d}{4})$, which we shall now divide into concentric annuli, as was done originally. 

\begin{figure}[!ht]
\centering
\fbox{
\begin{adjustbox}{width=0.7\textwidth}
\begin{tikzpicture}[line cap=round,line join=round,>=triangle 45,x=1cm,y=1cm]
\clip(-6.1,-6.1) rectangle (6.1,6.1);
\draw [line width=2pt] (0,0) circle (0.75cm);
\draw [line width=2pt,dash pattern=on 10pt off 10pt] (0,0) circle (1.9cm);
\draw [line width=2pt,dash pattern=on 10pt off 10pt] (0,0) circle (3.2cm);
\draw [line width=2pt] (0,0) circle (5cm);
\draw (-0.009172377727839933,0.3462314706029235) node[anchor=north west] {\Large$c^*$};
\draw (0.4,-0.4) node[anchor=north west] {\Large$\frac{d}{4}$};
\draw (3.8,-3) node[anchor=north west] {\Large$f$};
\begin{scriptsize}
\draw [fill=black] (0,0) circle (1.5pt);
\end{scriptsize}
\end{tikzpicture}
\end{adjustbox}}
\end{figure}

We shall have the following annuli (for some $m$ which we shall define later) :
\begin{itemize}
    \item $B(c^*,m\frac{d}{4})\setminus B(c^*,\frac{d}{4})$
    \item $B(c^*,m^2\frac{d}{4})\setminus B(c^*,m\frac{d}{4})$
    
    \hspace{1.5cm} \vdots
    \item $B(c^*,m^h\frac{d}{4})\setminus B(c^*,m^{h-1}\frac{d}{4})$
\end{itemize}

Where $h$ is such that $m^h\frac{d}{4}\geq f$, such that the last annulus has radius larger than $f$. This will guarantee that any possible request allocated to $c^*$ must necessarily lie in one of the annuli (and at most one request can lie in the innermost ball). Now we shall do the exact same analysis as \cite{Fotakis2008} on each of these annuli. Given any annulus, let us look at the requests in the annulus and ask how much $RFL$ pays for these requests in expectation. Using \cref{prop1} (We can use this because Online Facility Location can be considered as a special case of Weighted Online Facility Location where each request has unit weight and then WRFL is exactly same as RFL), we know that $RFL$ pays a distance cost $\leq f$ before a facility opens inside the annulus and then $RFL$ pays an additional cost of $f$ for opening the first facility inside the annulus.

After this, we can assume that for all the other requests in the annulus, there is an active facility in the annulus when the request arrives. Now if the request is in the $i$-th annuli, its distance from $c^*$ is at least $m^{i-1}\frac{d}{4}$ and by triangle inequality, no matter where $RFL$ opens the facility in the annulus, the request can be at most at a distance of $2m^i\frac{d}{4}$ from $RFL$'s open facility. Now by \cref{prop2} (Again, we can use this because it is a special version where the weight units are $1$ and $RFL$ is same as $WRFL$), $RFL$ pays in expectation a cost less than $4m^i\frac{d}{4}$ on the request, in comparison to $OPT$ paying $m^{i-1}\frac{d}{4}$ at least. This means that for each request $RFL$ pays at most $4m$ times the distance cost paid by $OPT$ in expectation.

However, like we calculated, $RFL$ also paid a cost of $2f$ for each annulus, on opening the first facility and also on distance costs before opening the first facility in the annuli. Since there are $h$ annuli, $RFL$ pays a cost of $2f.h$ in total, over all annuli. So if $OPT$ pays a distance cost of $C$ over the requests allotted to $c^*$, $OPT$ pays a total of $f+C$ for these requests. On the other hand $RFL$ pays less than $2fh+4mC$ in expectation. Also, there may or may not be a request in the ball $B(c^*,\frac{d}{4})$, which we have not yet accounted for in $RFL$'s analysis. However for one request $RFL$ can at most incur a cost of $f$.

Therefore we have $C_{OPT}=f+C$, while $C_{RFL}\leq 2fh+4mC+f=f(2h+1)+C(4m)$. So competitive ratio will be $\mu=\max \{2h+1,4m\}$. However $h$ and $m$ cannot be arbitrarily small since we must have $m^h d/4\geq f$ as we have seen earlier. Therefore $m^h\geq \nicefrac{4f}{d}$. Choosing $m=h=\frac{\log \nicefrac{4f}{d}}{\log \log \nicefrac{4f}{d}}$, we can satisfy the condition and obtain a competitive ratio of $\mu=\mathcal{O}\left(\frac{\log \nicefrac{4f}{d}}{\log \log \nicefrac{4f}{d}}\right)=\mathcal{O}\left(\frac{\log \nicefrac{f}{d}}{\log \log \nicefrac{f}{d}}\right)$, which is independent of $n$.
\end{proof}

\section{Missing Proofs}\label{sec:misproof}

\begin{proof}[Proof of \cref{prop1}]
Let $X_i=\{(x_i,w_i),(x_{i+1}w_{i+1}),\dots,(x_m,w_m)\}\;.$

Let $C_i$ be the expected cost paid by $WRFL$ before a facility opens at any request in $X_i$, when the input is $X_i$. Note that now, we are interested in $C_1$, which is the expected cost paid by $WRFL$ before a facility opens in $X$. From \cref{WRFL}, we recall that $WRFL$ opens a new facility at the request $x_i$ with probability $p_i$ and with the remaining probability it assigns $x_i$ to the nearest open facility.
\par
One can see that $C_1= p_1\cdot 0+(1-p_1)(p_1\cdot f+C_2)$ since with probability $p_1$ a facility is opened at $x_1$ itself and no cost is incurred before the first facility is opened but with probability $(1-p_1)$, no facility is opened and a cost of $p_1\cdot f=d(x_1,F)\cdot w_1$ is incurred. However since we have not yet opened any facility, in expectation $C_2$ additional cost will be incurred before a facility is opened - because of how $C_2$ is defined.

We can use the same argument for any $C_i$ now. That is, when $X_i$ is given as input, with probability $p_i$ a facility is opened at $x_i$ itself and no cost is incurred before the first facility is opened and with probability $(1-p_i)$, no facility is opened and a cost of $p_i\cdot f$ is incurred
$$C_i= (1-p_i)(p_i\cdot f+C_{i+1})$$

Now notice that $C_m$ is the expected cost paid by $WRFL$ before a facility opens when input is $\{(x_m,w_m)\}$. Therefore $C_m= p_m\cdot f\cdot (1-p_m)$ since with probability $(1-p_m)$, facility is not opened and then the cost paid is $p_m\cdot f$. This means that $C_m\leq f$.
\par
Assuming $C_{i+1}\leq f$ for any $i$ gives us 
$$C_i\leq (1-p_i)(p_i\cdot f+f)=f(1-p_i)(1+p_i)=f(1-p_i^2)\leq f\;.$$
$\therefore{} C_1\leq f\;.$
\end{proof}

\begin{proof}[Proof of \cref{prop2}]
Since the distance of $x_i$ to the nearest open facility is $d(x_i,F)$, $WRFL$ would open a new facility with probability $\leq \frac{d(x_i,F)\cdot w_i}{f}$ and pay a cost $f$ and with remaining probability (of at most $1$) it would pay a cost of $d(x_i,F)\cdot w_i$. Therefore expected cost paid for the request $x_i$ is at most
$$\frac{d(x_i,F)\cdot w_i}{f}\times f+1\times d(x_i,F)\cdot w_i=2d(x_i,F)\cdot w_i\;.$$
\end{proof}

\begin{claim}\label{clm1}
Let $WRFL$, in expectation, incur a cost of $C_{WRFL}(S_{c^*})$ over the requests in $S_{c^*}$ and let $OPT$ incur a cost of $C_{OPT}(S_{c^*})$ over the requests in $S_{c^*}$. Then the competitive ratio of $WRFL$ is at most $\max\limits_{c^*}\frac{C_{WRFL}(S_{c^*})}{C_{OPT}(S_{c^*})}$, in expectation.
\end{claim}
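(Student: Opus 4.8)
The plan is to establish the claim by decomposing both the total cost of $OPT$ and the total (expected) cost of $WRFL$ as sums over the clusters $\{S_{c^*}\}$ induced by $OPT$, and then applying the elementary mediant inequality. The key structural fact is that $OPT$ assigns each request to exactly one facility $c^*$, so the collection $\{S_{c^*}\}$ (ranging over facilities opened by $OPT$) is a partition of the entire request set $S$.

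First I would record the exact decomposition of $OPT$'s cost. For each facility $c^*$, $OPT$ pays an opening cost $f$ once and a total allocation cost $\sum_{x\in S_{c^*}} d(x,c^*)\cdot w_x$, so $C_{OPT}(S_{c^*}) = f + \sum_{x\in S_{c^*}} d(x,c^*)\cdot w_x$. Since every request lies in exactly one cluster and every facility of $OPT$ corresponds to exactly one cluster, summing over $c^*$ yields the total cost of $OPT$ exactly: $C_{OPT} = \sum_{c^*} C_{OPT}(S_{c^*})$, with no double-counting. Next I would do the analogous decomposition for $WRFL$, invoking the per-request accounting convention fixed earlier in the section: each request is charged precisely the cost $WRFL$ pays at the moment of its allocation (its weighted distance cost, plus $f$ if a new facility is opened on it). Summing these per-request charges reconstitutes exactly the total distance cost plus $f$ times the number of facilities $WRFL$ opens, i.e. the full cost of $WRFL$. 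Because each request again belongs to exactly one cluster $S_{c^*}$, this gives $C_{WRFL} = \sum_{c^*} C_{WRFL}(S_{c^*})$ as a deterministic identity on every sample path; taking expectations and using linearity then gives the same identity for the expected costs, while $C_{OPT}$ remains deterministic.

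Finally I would combine these with the mediant inequality: for positive reals $a_{c^*},b_{c^*}$ one has $\frac{\sum_{c^*} a_{c^*}}{\sum_{c^*} b_{c^*}} \le \max_{c^*} \frac{a_{c^*}}{b_{c^*}}$, proved by setting $M=\max_{c^*} a_{c^*}/b_{c^*}$, noting $a_{c^*}\le M\,b_{c^*}$ termwise, and summing. Applying this with $a_{c^*}=\mathbb{E}[C_{WRFL}(S_{c^*})]$ and $b_{c^*}=C_{OPT}(S_{c^*})$ gives
\[
\frac{\mathbb{E}[C_{WRFL}]}{C_{OPT}} = \frac{\sum_{c^*}\mathbb{E}[C_{WRFL}(S_{c^*})]}{\sum_{c^*}C_{OPT}(S_{c^*})} \le \max_{c^*}\frac{\mathbb{E}[C_{WRFL}(S_{c^*})]}{C_{OPT}(S_{c^*})},
\]
which is exactly the claimed bound on the competitive ratio.

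The step I expect to require the most care is the $WRFL$ decomposition $C_{WRFL}=\sum_{c^*}C_{WRFL}(S_{c^*})$: one must argue that the per-request accounting convention genuinely partitions $WRFL$'s total cost, so that no facility-opening or distance cost is omitted or counted twice. The subtlety is that a facility $WRFL$ opens while serving a request in $S_{c^*}$ may later serve requests from other clusters, and conversely a request in $S_{c^*}$ may be routed to a facility opened on a request of another cluster; the convention resolves this by charging each facility-opening cost exactly once, to the single request that triggered it, and charging each request only its own allocation-time distance cost. Once this bookkeeping is made explicit, the remaining steps are the exact identity for $OPT$, linearity of expectation (noting $OPT$ is a fixed deterministic benchmark), and the mediant inequality, all of which are routine.
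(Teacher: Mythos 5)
Your proposal is correct and follows essentially the same route as the paper: the paper's one-line proof simply writes the competitive ratio as $\frac{\sum_{c^*}C_{WRFL}(S_{c^*})}{\sum_{c^*}C_{OPT}(S_{c^*})}$ and bounds it by the maximum ratio, i.e.\ the same cluster decomposition plus the mediant inequality. Your additional care about the per-request accounting convention is a more explicit version of the bookkeeping the paper fixes just before stating the claim, not a different argument.
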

\begin{proof}
The competitive ratio of $WRFL$ is by definition $\frac{\sum_{c^*}C_{WRFL}(S_{c^*})}{\sum_{c^*}C_{OPT}(S_{c^*})}\leq \max\limits_{c^*}\frac{C_{WRFL}(S_{c^*})}{C_{OPT}(S_{c^*})}$.

\end{proof}

\section{Analysis of RFL}
Here we shall present the analysis of $RFL$ to show that it attains a competitive ratio of $\frac{\log n}{\log \log n}$ in expectation. Throughout the analysis we shall maintain the fact that there might be open facilities even before the first request arrives. Also new facilities might open in between two requests. This is a necessary property of the analysis that we had relied on while analyzing $MRFL$.

This analysis was done by Fotakis in \cite{Fotakis2008}.
\begin{theorem}\label{RFLC}
There are two parts to this theorem :
\begin{itemize}
    \item In the congestion free model, if $OPT$ has $n$ requests allotted to a facility, then the cost ratio of $RFL$ over these $n$ requests is $\frac{\log n}{\log \log n}$ in expectation.
    \item If $OPT$ assigns at most $n$ requests to a facility in the congestion free model then the competitive ratio of $RFL$ is $\frac{\log n}{\log \log n}$ in expectation.
\end{itemize}
\end{theorem}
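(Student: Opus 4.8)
The plan is to prove the first (single-cluster) statement and then obtain the second by the max-over-clusters reduction of \cref{clm1}, exactly as in the bounded-away analysis of \cref{sec:bounded}; the only genuinely new requirement is that every estimate survive the presence of facilities opened before the cluster's first request and of facilities opened by interleaved requests from other clusters. Fix a facility $c^*$ opened by $OPT$ together with the set $S$ of requests it serves, and let $C=\sum_{x\in S}d(x,c^*)$, so that $OPT$ pays $f+C$ on these requests. I would split $S$ into three regimes by distance from $c^*$: the far requests with $d(x,c^*)>f$, the innermost ball $B(c^*,f/n)$, and the intermediate shell $B(c^*,f)\setminus B(c^*,f/n)$, the last of which I partition into the $h$ geometric annuli $B(c^*,m^{j}f/n)\setminus B(c^*,m^{j-1}f/n)$ for a base $m$ to be chosen at the end, with $m^{h}\ge n$ so that the shell is covered.

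The two regimes outside the shell contribute only $O(f+C)$. For a far request, if its nearest facility lies at distance $\ge f$ then $p_x=1$ and $RFL$ pays exactly $f\le d(x,c^*)$; otherwise \cref{prop2} bounds its cost by $2d(x,F_x)<2f<2d(x,c^*)$. Either way the far requests cost at most $2C$ in total. The innermost ball holds at most $n$ requests of total $OPT$ distance at most $n\cdot(f/n)=f$; by \cref{prop1}, $RFL$ pays at most $f$ in expectation before it opens the first facility inside the ball, that opening costs a further $f$, and every later request there lies within $2f/n$ of that facility, so by \cref{prop2} the ball costs $O(f)$ in all. For the shell I repeat the annulus bookkeeping: in each annulus, \cref{prop1} bounds the expected cost incurred before the first facility opens inside it by $f$, that opening costs a further $f$, and thereafter each request of the annulus is within $2m^{j}f/n$ of an open facility while $OPT$ pays at least $m^{j-1}f/n$ for it, so \cref{prop2} charges $RFL$ at most $4m$ times $OPT$'s distance cost on it. Summing over the $h$ annuli, the shell costs at most $2fh+4mC$.

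Collecting the three regimes gives $C_{RFL}\le O(f)+2C+2fh+4mC\le\max\{2h+O(1),\,4m+2\}\,(f+C)$, so the single-cluster ratio is $\max\{O(h),O(m)\}$ subject to $m^{h}\ge n$. Balancing by taking $m=h=\Theta(\log n/\log\log n)$, so that $m^{h}=\Theta(n)$, proves the first part; the naive choice $m=2$, $h=\log n$ is what would give only $O(\log n)$, and it is exactly this optimization that recovers the $\log\log n$ factor. The second part then follows because $|S_i|\le n$ for every $OPT$-cluster, so the first part yields ratio $O(\log n/\log\log n)$ on each cluster, and \cref{clm1} bounds the overall competitive ratio by the maximum of the per-cluster ratios. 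The main obstacle, and the very reason this appendix is needed, is that $C_{1_{S_i}}$ in the $MRFL$ argument is $RFL$'s cost on $S_i$ while the \emph{entire} input $S$ is running: facilities may already be open when $S_i$ begins and may appear between its requests. This, however, is precisely the setting in which \cref{prop1} was stated (see its parenthetical), and since adding facilities can only decrease the nearest-facility distance used in \cref{prop2}, both estimates above hold verbatim and no step of the count is affected.
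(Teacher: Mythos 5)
Your proof is correct and follows essentially the same route as the paper's: a geometric-annulus decomposition around $c^*$, with \cref{prop1} charging the pre-facility cost of each annulus to $f$, \cref{prop2} charging subsequent requests to $4m$ times $OPT$'s distance cost, and the balance $m=h=\Theta(\log n/\log\log n)$ under $m^h\ge n$. The only (harmless) deviation is your choice of radii from $f/n$ to $f$ with a separate direct-charging argument for requests farther than $f$, where the paper instead uses the average distance $A=S/n$ as the inner radius and the total distance $S$ as the outer cutoff; both yield the same annulus count and the same conclusion, and your handling of pre-existing and interleaved facilities matches the paper's.
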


Note that part $2$ of the Theorem follows from part $1$ of the Theorem, which is what we shall focus our attention on.
\begin{proof}[Proof of Part $1$ of {\cref{RFLC}}]
Let $c^*$ be the facility opened by $OPT$ that we are considering. Let $S$ denote the total distance cost of requests that are allocated to $c^*$ and let $A$ denote the average distance of these requests from $c^*$.
\begin{align*}
    S&=\sum_i d(x_i,c^*)&&\texttt{Where $x_i$'s are the requests assigned to $c^*$\;.}\\
    A&=\frac{S}{n}&&\texttt{Where $n$ is the number of requests assigned to $c^*$\;.}
\end{align*}
Therefore, the cost paid by $OPT$ over these requests assigned to $c^*$ is $f+S$.

Let us now split up the metric space containing the requests into concentric circles centered at $c^*$ and a ball centered at $c^*$. In order to do this, let $m$ be a constant that we shall determine later.
\begin{itemize}
    \item $B(c^*,A)$
    \item $B(c^*,mA)\setminus B(c^*,A)$
    \item $B(c^*,m^2A)\setminus B(c^*,mA)$
    \item $B(c^*,m^3A)\setminus B(c^*,m^2A)$
\end{itemize}
and so on. Please refer to \cref{fig1} for better visualisation.

\begin{figure}[ht]
\begin{framed}[0.9\textwidth]
\centering
\begin{adjustbox}{width=0.6\textwidth}
\begin{tikzpicture}[line cap=round,line join=round,>=triangle 45,x=1cm,y=1cm]
\clip(-11,-6) rectangle (0,5.4);
\draw [line width=2pt,dash pattern=on 5pt off 5pt] (-5.541568796284952,-0.28921560185752426) circle (1cm);
\draw [line width=2pt,dash pattern=on 10pt off 10pt] (-5.541568796284952,-0.28921560185752426) circle (2cm);
\draw [line width=2pt,dash pattern=on 10pt off 10pt] (-5.541568796284952,-0.28921560185752426) circle (4cm);
\draw [line width=2pt,dash pattern=on 10pt off 10pt] (-5.541568796284952,-0.28921560185752426) circle (8cm);
\draw (-5.653573439396889,-0.1660227356626117) node[anchor=north west] {\Large\textbf{$c^*$}};
\draw (-5.653573439396889,-1.1660227356626117) node[anchor=north west] {\LARGE\textbf{$A$}};
\draw (-5.499848624536694,-2.2460184303963507) node[anchor=north west] {\LARGE\textbf{$mA$}};
\draw (-5.538279828251743,-4.282872227293925) node[anchor=north west] {\LARGE\textbf{$m^2A$}};
\draw (-5.634357837539365,-8.260501811801453) node[anchor=north west] {\LARGE\textbf{$m^3A$}};
\begin{scriptsize}
\draw [fill=black] (-5.541568796284952,-0.28921560185752426) circle (1.5pt);
\draw [fill=ffqqqq] (-6.40297267187791,1.8687292498277395) circle (3.5pt);
\draw [fill=qqffqq] (-1.982970646329269,-0.38109160101458706) circle (3.5pt);
\draw [fill=qqffqq] (-8.672703441754239,-1.5557768240207575) circle (3.5pt);
\draw [fill=qqffqq] (-5.168557691769731,-3.327759618047015) circle (3.5pt);
\draw [fill=qqqqff] (-8.214775303972173,0.5347646745495458) circle (3.5pt);
\draw [fill=qqffqq] (-7.358648785510049,2.1872879543717856) circle (3.5pt);
\draw [fill=ffqqqq] (-8.831982794026263,1.1121523265356297) circle (3.5pt);
\end{scriptsize}
\end{tikzpicture}
\end{adjustbox}\caption{This diagram encapsulates the annuli and the ball at the centre, all concentric with centre at $c^*$. For one annulus, $RFL$ might not open a facility for the first few requests and use other facilities nearby (These requests are shown in Red). After this, on some request $RFL$ will open a facility on it (Shown in Blue). All future requests (Shown in green) on arrival will have the facility opened on Blue request (or other open facilities nearer to the request), where they may be allocated by $RFL$, if needed. Notice that the distance from the blue request to any green request in the annulus can be at most $2\cdot m^2A$\;.}\label{fig1}\end{framed}
\end{figure}

We will now try to estimate how much $RFL$ pays over all the requests that are allocated to $c^*$. We shall start with $B(c^*,A)$. Using \cref{prop1}, we know that $RFL$ pays a cost of $f$ in expectation before a facility is opened inside $B(c^*,A)$. Also in order to open the first facility, a facility opening cost of $f$ will have to be incurred resulting in an expected cost of $2f$. At this point, for all the future requests in $B(c^*,A)$, $RFL$ will have this facility available. So any request in $B(c^*,A)$ will have an open facility at a distance of at most $2A$ from it and hence the expected cost paid for any such request is at most $4A$, using \cref{prop2}. Also there can be at most $n$ requests allocated to $c^*$ by $OPT$ that are within $B(c^*,A)$ (Since there are $n$ requests allocated to $c^*$ by $OPT$).

Therefore the expected cost of the algorithm for the requests allocated to $c^*$ by $OPT$ and are inside $B(c^*,A)$ is at most
$$f+f+4A.n=2f+4S\;.$$

(Note that this entire reasoning holds true even if there are facilities open before the first request arrives and also if facilities appear in between two consecutive requests of $B(c^*,A)$.)

Next up let us consider any annulus : $B(c^*,m^iA)\setminus B(c^*,m^{i-1}A)$ and try to estimate the cost paid by $RFL$ for the requests in this annulus. Once again we shall use \cref{prop1} of $RFL$ to state that a cost of $f$ in expectation is paid at most before a facility is opened inside this annulus. After that a facility opening cost of $f$ is incurred for the first facility opened. At this point we will assume that there is an open facility for all future requests in $B(c^*,m^iA)\setminus B(c^*,m^{i-1}A)$. Hence on input $x_i$ in this annulus, we know that it is at least at a distance of $m^{i-1}A$ from $c^*$ and we also know that since $RFL$ has an open facility inside the annulus, $x_i$ has an open facility at a distance of at most $2.m^iA$ from it. Now we shall use \cref{prop2} to state that in expectation at most $4.m^iA$ cost is paid by $RFL$ on the request $x_i$. Therefore while $OPT$ pays a cost of at least $m^{i-1}A$ for $x_i$, $RFL$ in expectation might pay as much as $4.m^iA$ which is at most $4m$ times the distance cost that $OPT$ paid for $x_i$. We note that this is true for all requests $x_i$ opened by $RFL$ after it opens a facility in the annulus and also this is true for all annuli.

(Once again, we note that this entire reasoning holds true even if there are facilities open before the first request arrives and also if open facilities appear in between two consecutive requests of the annuli.)

Since the first annulus has outer radius $m^2A$, second annulus has outer radius $m^3A$ and so on, let us try to estimate the maximum number of annuli that we need to consider. Let us look at the $h$-th annulus which has an outer radius of $m^{h+1}A$. If $m^{h+1}A\geq S$, then certainly there are no requests outside this annulus. Now $m^{h+1}A\geq S\iff m^{h+1}\geq n$. Given an $m$, the smallest $h$ that satisfies $m^{h+1}\geq n$ would give us the number of annuli.

Now let us try to bound the cost that $RFL$ pays over all the requests on these annuli in expectation. Since there are at most $h$ annuli, the cost paid before a facility is opened and the first facility opening cost in the annuli together would add up to $2fh$. After the first facility is opened in each annuli, we used \cref{prop2} to state that in expectation $RFL$ pays a cost of $4m$ times the distance cost of the request. So summed up over all the requests in all the annuli, the expected cost paid by $RFL$ is at most $2fh+4mS$ since the sum of distance costs over all the requests is $S$ by definition. Adding the expected cost of the ball in the centre gives us $2fh+4mS+2f+4S=2f(h+1)+4S(m+1)$.

We note that the cost of $OPT$ was $f+S$, giving us a competitive ratio of $\mathcal{O}(\max\{h,m\})$ where $h$ and $m$ must satisfy $m^{h+1}\geq n$. This can be attained by choosing $m$ and $h$ such that $m=h=\Theta\left(\frac{\log n }{\log \log n}\right)$, which would give us $m^{h+1}\geq n$ and a competitive ratio of $\frac{\log n}{\log \log n}$ for $RFL$.

\end{proof}
\end{document}